\newtheorem{openproblem}{Problem}
\theoremstyle{definition}
\newtheorem{observation}{Observation}
\begin{document}

\title{Guarding Quadrangulations and Stacked Triangulations with Edges\thanks{(Extended) abstracts of this paper were presented at the 36th European Workshop on Computational Geometry (EuroCG 2020)~\cite{Jungeblut2020EuroCG} and the 46th International Workshop on Graph-Theoretic Concepts in Computer Science (WG 2020)~\cite{Jungeblut2020WG}.}}
\author{Paul Jungeblut \and Torsten Ueckerdt}
\authorrunning{P. Jungeblut, T. Ueckerdt}
\institute{Karlsruhe Institute of Technology\\
\email{\{paul.jungeblut,torsten.ueckerdt\}@kit.edu}}
\maketitle

\begin{abstract}
  Let~$G = (V,E)$ be a plane graph.
  A face~$f$ of~$G$ is \emph{guarded} by an edge~$vw \in E$ if at least one vertex from~$\{v,w\}$ is on the boundary of~$f$.
  For a planar graph class~$\mathcal{G}$ we ask for the minimal number of edges needed to guard all faces of any~$n$-vertex graph in~$\mathcal{G}$.
  We prove that~$\lfloor n/3 \rfloor$ edges are always sufficient for quadrangulations and give a construction where~$\lfloor (n-2)/4 \rfloor$ edges are necessary.
  For~$2$-degenerate quadrangulations we improve this to a tight upper bound of~$\lfloor n/4 \rfloor$ edges.
  We further prove that~$\lfloor 2n/7 \rfloor$ edges are always sufficient for stacked triangulations (that are the $3$\nobreakdash-degenerate triangulations) and show that this is best possible up to a small additive constant.
\keywords{Edge guard sets \and Art galleries \and Quadrangulations \and Stacked triangulations}
\end{abstract}

\section{Introduction}

In 1975, Chv\'atal~\cite{Chvatal1975} laid the foundation for the widely studied field of \emph{art gallery problems} by answering how many guards are needed to observe all interior points of any given~$n$-sided polygon~$P$.
Here a guard is a point~$p$ in~$P$ and it can observe any other point~$q$ in~$P$ if the line segment~$pq$ is fully contained in~$P$.
He shows that~$\lfloor n/3 \rfloor$ guards are sometimes necessary and always sufficient.
Fisk~\cite{Fisk1978} revisited Chv\'atal's Theorem in 1978 and gave a very short and elegant new proof by introducing diagonals into the polygon~$P$ to obtain a triangulated, outerplanar graph.
Such graphs are~$3$-colorable and in each~$3$-coloring all faces are incident to vertices of all three colors, so the vertices of the smallest color class can be used as guard positions.
Countless variants of this problem have appeared in the literature, see the book by O'Rourke~\cite{ORourke1987} for a survey on several basic variants.
For example Kahn, Klawe and Kleitman~\cite{Kahn1983} prove that~$n$-vertex orthogonal polygons only require~$\lfloor n/4 \rfloor$ guards and several authors considered polygons with holes~\cite{Bjorling1995,Hoffmann1991}.
Bose et al.~\cite{Bose1997} studied the problem to guard the faces of a plane graph instead of a polygon.
A \emph{plane graph} is a graph~$G=(V,E)$ with an embedding in~$\mathbb{R}^2$ with not necessarily straight edges and without crossings between any two edges.
Here a face~$f$ is guarded by a vertex~$v$, if~$v$ is on the boundary of~$f$.
They show that~$\lfloor n/2 \rfloor$ vertices (so called \emph{vertex guards}) are sometimes necessary and always sufficient for~$n$-vertex plane graphs.

We consider a variant of this problem introduced by O'Rourke~\cite{ORourke1983}.
He shows that only~$\lfloor n/4 \rfloor$ guards are necessary in Chv\'atal's original setting if each guard is assigned to an edge of the polygon that he can patrol along instead of being fixed to a single point.
Considering plane graphs again, an \emph{edge guard} is an edge~$vw \in E$ and it guards all faces having~$v$ and/or~$w$ on their boundary.
For a given planar graph class~$\mathcal{G}$, we ask for the minimal number of edge guards needed to guard all faces of any~$n$-vertex graph in~$\mathcal{G}$.
Here and in the following the outer face is treated just like any other face and must also be guarded.

General (not necessarily triangulated)~$n$-vertex plane graphs might need at least~$\lfloor n/3 \rfloor$ edge guards, even when requiring~$2$-connectedness~\cite{Bose1997}.
The best known upper bounds have recently been presented by Biniaz et al.~\cite{Biniaz2019} and come in two different fashions:
First, any~$n$-vertex plane graph can be guarded by~$\lfloor 3n/8 \rfloor$ edge guards found in an iterative process.
Second, a coloring approach yields an upper bound of~$\lfloor n/3 + \alpha/9 \rfloor$ edge guards where~$\alpha$ counts the number of quadrangular faces in~$G$.
Looking at~$n$-vertex triangulations, Bose et al.~\cite{Bose1997} give a construction for triangulations needing~$\lfloor (4n-8)/13 \rfloor$ edge guards\footnote{
  The authors of~\cite{Bose1997} actually claim that~$\lfloor (4n-4)/13 \rfloor$ edge guards are necessary, but this result is only valid for \emph{near}-triangulations (this was noted first by Kau\v{c}i\v{c} et al.~\cite{Kaucic2003} and later clarified by one of the original authors~\cite{Bose2009}).
  For proper triangulations an additional vertex is needed in the construction so only~$\lfloor (4n-8)/13 \rfloor$ edge guards are necessary.
}.
A corresponding upper bound of~$\lfloor n/3 \rfloor$ edge guards was published earlier in the same year by Everett and Rivera-Campo~\cite{Everett1997}.

\paragraph{Preliminaries.}
All graphs considered throughout this paper are undirected and simple (unless explicitly stated otherwise).
Let~$G=(V,E)$ be a graph.
For an edge~$\{v,w\} \in E$ we use the shorter notation~$vw$ or~$wv$ and both mean the same.
The \emph{order} of~$G$ is its number of vertices and denoted by~$\lvert G \rvert$.
Similarly the \emph{size} of~$G$ is its number of edges.
We say that~$G$ is \emph{$k$\nobreakdash-regular} if each vertex~$v \in V$ has degree exactly~$k$.
Further~$G$ is called \emph{$k$\nobreakdash-degenerate} if every subgraph contains a vertex of degree at most~$k$.
For the subgraph induced by a subset~$X \subseteq V$ of the vertices we write~$G[X]$.
Now assume that~$G$ is plane with face set~$F$.
The \emph{dual graph}~$G^*=(V^*,E^*)$ is defined by $V^* = \{f^* \mid f \in F\}$ and~$E^* = \{f^*g^* \mid f,g \in F \land \text{$f,g$ share a boundary edge~$vw \in E$}\}$.
Note that~$G^*$ can be a multigraph.
The dual graph~$G^*$ of a plane graph~$G$ is also planar and we assume below that a plane drawing of~$G^*$ is given that is inherited from the plane drawing of~$G$ as follows:
Each dual vertex~$f^* \in V^*$ is drawn inside face~$f$, each dual edge~$f^*g^* \in E^*$ crosses its primal edge~$vw$ exactly once and in its interior and no two dual edges cross.

Let~$\Gamma \subseteq E$ be a set of edges.
We write~$V(\Gamma)$ for the set of endpoints of all edges in~$\Gamma$.
Further,~$\Gamma$ is an \emph{edge guard set} if all faces~$f \in F$ are guarded by at least one edge in~$\Gamma$, i.e. each face~$f$ has a boundary vertex in~$V(\Gamma)$.

\paragraph{Contribution.}
In Section~\ref{sec:quad} we consider the class of quadrangulations, i.e. plane graphs where every face is bounded by a~$4$-cycle.
We describe a coloring based approach to improve the currently best known upper bound to~$\lfloor n/3 \rfloor$ edge guards.
In addition we also consider~$2$-degenerate quadrangulations and present an upper bound of~$\lfloor n/4 \rfloor$ edge guards, which is best possible.
Our motivation to consider quadrangulations is that the coloring approaches developed earlier for general plane graphs~\cite{Biniaz2019,Everett1997} fail on quadrangular faces.
As a second result in Section~\ref{sec:stacked} we present a new upper bound of~$\lfloor 2n/7 \rfloor$ edge guards for stacked triangulations and show that it is best possible.
We saw above that no infinite family of triangulations is known that actually needs~$\lfloor n/3 \rfloor$ edge guards.
With the stacked triangulations we now know a non-trivial subclass of triangulations for which strictly fewer edge guards are necessary than for general plane graphs.
A short overview of previous and new results is given in Table~\ref{tbl:results}.

The proofs for all presented upper bounds here are constructive and we give additional details on how to turn them into efficient algorithms.

\begin{table}
  \centering
  \caption{
    Previous results (top) and contributions from this paper (bottom) for edge guard sets in plane graphs.
    Parameter~$\alpha$ counts the number of quadrangular faces.
  }
  \renewcommand{\arraystretch}{1.3}
  \begin{tabular}{lcc}
    \toprule
    Graph Class & Lower Bound & Upper Bound \\
    \midrule
    Plane Graph &
    $\left\lfloor \frac{n}{3} \right\rfloor$~\cite{Bose1997} &
    $\min\left\{\left\lfloor \frac{3n}{8} \right\rfloor, \left\lfloor \frac{n}{3} + \frac{\alpha}{9} \right\rfloor\right\}$~\cite{Biniaz2019} \\

    Triangulation &
    $\left\lfloor \frac{4n-8}{13} \right\rfloor$~\cite{Bose1997} &
    $\left\lfloor \frac{n}{3} \right\rfloor$~\cite{Everett1997} \\

    Outerplanar &
    $\left\lfloor \frac{n}{3} \right\rfloor$~\cite{Bose1997} &
    $\left\lfloor \frac{n}{3} \right\rfloor$~\cite{Chvatal1975} \\

    Maximal Outerplanar &
    $\left\lfloor \frac{n}{4} \right\rfloor$~\cite{ORourke1983} &
    $\left\lfloor \frac{n}{4} \right\rfloor$~\cite{ORourke1983} \\

    \midrule
    Stacked Triangulation &
    $\left\lfloor \frac{2n-4}{7} \right\rfloor$ &
    $\left\lfloor \frac{2n}{7} \right\rfloor$ \\

    Quadrangulation &
    $\left\lfloor \frac{n-2}{4} \right\rfloor$ &
    $\left\lfloor \frac{n}{3} \right\rfloor$ \\

    $2$-Degenerate Quadrangulation &
    $\left\lfloor \frac{n-2}{4} \right\rfloor$ &
    $\left\lfloor \frac{n}{4} \right\rfloor$ \\
    \bottomrule
  \end{tabular}
  \label{tbl:results}
\end{table}

\section{Quadrangulations}
\label{sec:quad}

Quadrangulations are the maximal plane bipartite graphs and every face is bounded by exactly four edges.
The currently best known upper bounds are the ones given by Biniaz et al.~\cite{Biniaz2019} for general plane graphs ($\lfloor 3n/8 \rfloor$ respectively~$\lfloor n/3 + \alpha/9 \rfloor$, where~$\alpha$ is the number of quadrilateral faces).
For~$n$\nobreakdash-vertex quadrangulations we have~$\alpha = n-2$, so~$\lfloor n/3 + (n-2)/9 \rfloor = \lfloor (4n-2)/9 \rfloor > \lfloor 3n/8 \rfloor$ for~$n \geq 4$.
In this section we provide a better upper bound of~$\lfloor n/3 \rfloor$ and a construction for quadrangulations needing~$\lfloor (n-2)/4 \rfloor$ edge guards.
Closing the gap remains an open problem.

\begin{theorem}
  \label{thm:quadsLowerBound}
  For~$k \in \mathbb{N}$ there exists a quadrangulation~$Q_k$ with~$n = 4k + 2$ vertices needing~$k = (n-2)/4$ edge guards.
\end{theorem}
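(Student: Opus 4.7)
My plan is to construct an explicit planar quadrangulation $Q_k$ on $n=4k+2$ vertices that contains $k$ pairwise ``guard-independent'' faces $f_1,\dots,f_k$: no single edge of $Q_k$ has an endpoint on the boundary of two distinct $f_i$'s. Once such a $Q_k$ is in hand, any edge guard set $\Gamma$ must contain, for each $i$, an edge with at least one endpoint in $V(\partial f_i)$, and guard-independence forces these edges to be pairwise distinct, giving $|\Gamma|\geq k$.

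Concretely, I would build $Q_k$ from $k$ vertex-disjoint ``gadgets'' $V_1,\dots,V_k$ of size $4$ each, plus $2$ auxiliary hub vertices $x,y$. Each $V_i=\{a_i,b_i,c_i,d_i\}$ spans a $4$-cycle that bounds the sensor face $f_i$. The only edges of $Q_k$ leaving a gadget $V_i$ go to the hubs $\{x,y\}$, so no edge joins two distinct gadgets. The edges to the hubs and the planar embedding are chosen so that the whole graph is simple, bipartite, and has every face of length $4$.

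The count works out cleanly: $n=4k+2$ vertices, $4$ gadget-internal edges and $4$ hub-to-gadget edges per gadget give $m=8k=2n-4$, and Euler's formula together with bipartiteness then forces $f=4k$ faces, all of length $4$. Guard-independence is immediate from the construction: $V_i\cap V_j=\emptyset$ and the absence of edges between $V_i$ and $V_j$ mean no single edge $uv$ can satisfy $\{u,v\}\cap V_i\neq\emptyset$ and $\{u,v\}\cap V_j\neq\emptyset$ simultaneously, so no edge guards two different $f_i$'s.

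The main obstacle is producing a valid planar embedding that simultaneously respects all three properties (planar, bipartite, all $4$-faces). The obvious symmetric hub attachment---both hubs connected to the same opposite pair of corners in every gadget---produces a $K_{3,3}$-subdivision once $k\geq 2$ and has to be ruled out. The actual construction therefore needs an asymmetric, alternating attachment of the two hubs to successive gadgets, together with a carefully specified cyclic ordering of edges at $x$ and $y$, so that each pair of consecutive hub-edges bounds a $4$-face with the nearby gadget. Getting this combinatorial layout right is the delicate step; once it is verified, the independence argument and the lower bound of $k$ guards follow easily.
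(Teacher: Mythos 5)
Your overall strategy is exactly the paper's: build $k$ vertex-disjoint $4$-gon gadgets whose inner faces are pairwise ``guard-independent'' and attach them to two hub vertices, so that any edge guard set needs a distinct edge per gadget face. The counting ($n=4k+2$, $m=8k$, $f=4k$) and the independence argument are fine. The problem is that you never actually produce the graph: the theorem's entire content is the existence of a concrete planar quadrangulation, and your proposal ends by declaring the construction ``the delicate step'' and leaving it unverified. Worse, the reason you give for needing an ``asymmetric, alternating attachment'' is wrong. The symmetric attachment you rule out --- both hubs $s,t$ joined to the same opposite pair $a_i,c_i$ of every gadget $4$-cycle $a_ib_ic_id_i$ --- is precisely the paper's construction, and it contains no $K_{3,3}$-subdivision. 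Every $b_i$ and $d_i$ has degree $2$, so after suppressing them the graph is (the simple version of) $K_{2,2k}$ plus a matching $a_ic_i$, which is planar: embed $K_{2,2k}$ with rotation $a_1,c_1,a_2,c_2,\dots$ around $s$, and inside each quadrilateral face $s\,a_i\,t\,c_i$ place $b_i$ and $d_i$ on the two sides of the face. This yields faces $(s,a_i,b_i,c_i)$, $(a_i,b_i,c_i,d_i)$, $(a_i,d_i,c_i,t)$ and $(s,c_i,t,a_{i+1})$, all quadrilaterals, for a total of $4k$ faces as required.

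So the gap is twofold: you incorrectly reject the construction that works (no three branch vertices can be joined by nine internally disjoint paths here, since all candidates for a third branch vertex on either side have degree $2$ and their only neighbours are $a_i$ and $c_i$), and you do not supply, let alone verify, the alternative you propose in its place. A correct write-up should simply exhibit $V=\{s,t\}\cup\bigcup_i\{a_i,b_i,c_i,d_i\}$ with $E=\bigcup_i\{sa_i,sc_i,ta_i,tc_i,a_ib_i,b_ic_i,c_id_i,d_ia_i\}$, give the embedding above, observe that any path between the faces $(a_i,b_i,c_i,d_i)$ and $(a_j,b_j,c_j,d_j)$ for $i\neq j$ passes through $s$ or $t$ and hence has length at least $2$, and conclude that no single edge guards two of these faces. (One can also note that $\{sa_1,\dots,sa_k\}$ is an edge guard set of size $k$, so the bound is exact.)
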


\begin{proof}
  Define~$Q_k = (V,E)$ with~$V := \left\{s, t\right\} \cup \bigcup_{i = 1}^k \{a_i, b_i, c_i, d_i\}$ and~$E := \bigcup_{i=1}^k \{sa_i,\, sc_i,\, ta_i,\allowbreak tc_i,\, a_ib_i,\, a_id_i,\, c_ib_i,\, c_id_i\}$ as the union of $k$ disjoint $4$\nobreakdash-cycles and two extra vertices~$s$ and $t$ connecting them.
  Figure~\ref{fig:quadLowerBound} shows this and a planar embedding.
  Now for any two distinct~$i, j \in \{1,\dots,k\}$ the two quadrilateral faces~$(a_i, b_i, c_i, d_i)$ and~$(a_j, b_j, c_j, d_j)$ are only connected via paths through~$s$ or~$t$.
  Therefore, no edge can guard two or more of them and we need at least~$k$ edge guards for~$Q_k$.
  On the other hand it is easy to see that~$\{sa_1, \ldots, sa_k\}$ is an edge guard set of size~$k$, so~$Q_k$ needs exactly~$k$ edge guards.
\end{proof}

\begin{figure}
  \centering
  \begin{minipage}{0.45\textwidth}
    \centering
    \includegraphics{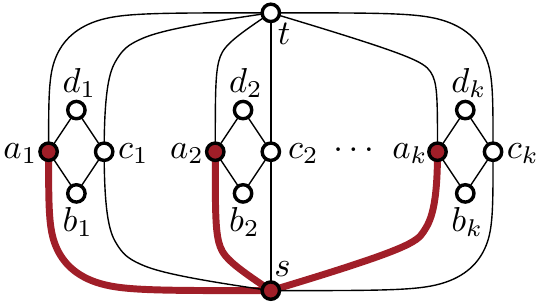}
    \caption{A quadrangulation with~$4k + 2$ vertices needing~$k$ edge guards (thick red edges).}
    \label{fig:quadLowerBound}
  \end{minipage}
  \hfill
  \begin{minipage}{0.45\textwidth}
    \centering
    \includegraphics{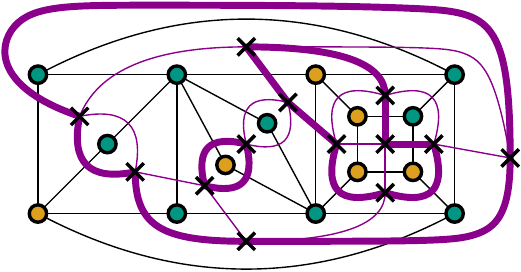}
    \caption{
      A quadrangulation~$G$ (black edges) and its dual~$G^*$ (purple edges) with a~$2$-factor (thick edges).
      The vertex coloring in orange and green is a guard coloring.
    }
    \label{fig:quadsUpperBound}
  \end{minipage}
\end{figure}

\noindent
The following definition and lemma are from Bose et al.~\cite{Bose2003} and we cite it using the terminology of~Biniaz et al.~\cite{Biniaz2019}.
A \emph{guard coloring} of a plane graph~$G$ is a non-proper $2$\nobreakdash-coloring of its vertex set, such that each face~$f$ of~$G$ has at least one boundary vertex of each color and at least one monochromatic edge (i.e. an edge where both endpoints receive the same color).
They prove that a guard coloring exists for all graphs without any quadrangular faces.

\begin{lemma}[{\cite[Lemma 3.1]{Bose2003}}]
  \label{lem:guardColoring}
  If there is a guard coloring for an~$n$-vertex plane graph~$G$, then~$G$ can be guarded by~$\lfloor n/3 \rfloor$ edge guards.
\end{lemma}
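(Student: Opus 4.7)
The plan is to reduce the edge-guard problem to a vertex-cover problem on the spanning monochromatic subgraph $G_m$ of $G$, defined as the subgraph consisting of all edges whose endpoints receive the same color under the guard coloring. By the defining property of a guard coloring, every face $f$ of $G$ has at least one boundary edge in $G_m$. Consequently, any edge set $\Gamma \subseteq E$ for which the vertex set $V(\Gamma)$ is a vertex cover of $G_m$ is automatically an edge guard set of $G$: every face has a boundary vertex in $V(\Gamma)$. The task therefore becomes to find such a $\Gamma$ with $|\Gamma| \le \lfloor n/3 \rfloor$.

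A first attempt would be to take a maximal matching $M$ of $G_m$: unmatched vertices then form an independent set in $G_m$, so $V(M)$ is a vertex cover of $G_m$ and $M$ itself is already an edge guard set of size $|M| \le \lfloor n/2 \rfloor$. To sharpen this bound to $\lfloor n/3 \rfloor$, I would exploit the freedom that $\Gamma$ may use \emph{any} edge of $G$, not only monochromatic ones. Concretely, I would combine matched pairs with adjacent unused vertices into $P_3$-sub\-graphs of $G$: whenever three vertices $a, b, c$ satisfy $ab \in M$ and $c$ is a $G$-neighbor of $b$ with the property that removing $a$ from $V(\Gamma)$ still leaves every $G_m$-edge covered, I would swap the matching edge $ab$ for the edge $bc$. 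This keeps $|\Gamma|$ constant but amortizes three vertices ($a$, $b$, $c$) against a single selected edge, shrinking the set eventually to $\lfloor n/3 \rfloor$ by iterating until no further exchange is possible.

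The main obstacle is showing that this amortization can be carried out globally. In the extreme case where $G_m$ is close to a perfect matching of isolated edges, no local exchange as above is available, and one needs a global argument combining planarity (via Euler's formula applied to $G$) with the structural force of the guard coloring — each face must contain both colors \emph{and} a monochromatic edge, which links the small $G_m$-components across faces and forces enough shared $G$-neighbors to amortize against. I expect the cleanest execution to avoid local exchanges altogether and instead build a spanning auxiliary subgraph (for instance, a $2$-factor, or a collection of spanning trees of the components of $G_m$ augmented by selected bichromatic edges), on which a matching/charging argument then yields the bound $\lfloor n/3 \rfloor$ directly. Since the lemma is cited from Bose, Kirkpatrick and Li~\cite{Bose2003}, the detailed counting can be inherited from their proof; the plan above captures the overall reduction and the structural insight that powers it.
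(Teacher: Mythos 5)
The paper states this lemma without proof, citing it from Bose, Kirkpatrick and Li, so the comparison here is against the standard argument rather than an in-paper one. Your reduction of the guarding task to finding an edge set $\Gamma$ whose endpoint set $V(\Gamma)$ is a vertex cover of the monochromatic subgraph $G_m$ is sound, and a maximal matching of $G_m$ does yield an edge guard set of size at most $\lfloor n/2 \rfloor$. But the rest of the proposal does not close the gap to $\lfloor n/3 \rfloor$: the local $P_3$-exchange is only sketched, you yourself identify the case where $G_m$ is close to a perfect matching as one where no such exchange is available, and the closing appeal to ``inherit the detailed counting'' from the cited paper is a deferral, not an argument. The tell-tale sign of the missing idea is that your argument uses only one of the two defining properties of a guard coloring (a monochromatic edge on every face) and never uses the other (every face sees both colors).

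The actual mechanism is a three-way averaging that exploits both properties at once. Let $R$ and $B$ be the color classes and let $m_R$ and $m_B$ be the sizes of maximum matchings of $G[R]$ and $G[B]$. Since every face has a red vertex, any edge set whose endpoints contain all of $R$ is an edge guard set; taking a maximum matching of $G[R]$ plus one arbitrary incident edge of $G$ for each unmatched red vertex gives such a set of size $|R| - m_R$. Symmetrically one gets an edge guard set of size $|B| - m_B$. Thirdly, the union of the two maximum matchings is a maximal matching of $G_m$ (whose edge set is the disjoint union of $E(G[R])$ and $E(G[B])$), so its endpoint set is a vertex cover of $G_m$ and, by your own reduction, it is an edge guard set of size $m_R + m_B$. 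The three sizes sum to exactly $n$, so the smallest of the three sets has at most $\lfloor n/3 \rfloor$ edges. This is precisely the complementarity you were groping for: when $G_m$ is a near-perfect matching the third set is large, but then $m_R + m_B$ is large and the first two sets are correspondingly small.
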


\noindent
Bose et al.~\cite{Bose2003} even present a linear time algorithm to compute a guard coloring for graphs without quadrangular faces.
We extend their result by showing that plane graphs consisting of only quadrangular faces also have a guard coloring.

\begin{theorem}
  \label{thm:quadsUpperBound}
  Every quadrangulation can be guarded by~$\lfloor n/3 \rfloor$ edge guards.
\end{theorem}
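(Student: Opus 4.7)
The plan is to produce a guard coloring of the quadrangulation $G$ and then invoke Lemma~\ref{lem:guardColoring}. Since every face of $G$ is a $4$-cycle, the dual graph $G^*$ is $4$-regular, so by Petersen's theorem (on $2k$-regular multigraphs) $G^*$ admits a $2$-factor $F^* \subseteq E^*$, i.e.\ a spanning subgraph whose components are vertex-disjoint cycles. Drawing $F^*$ within the inherited plane embedding of $G^*$ produces a finite family of pairwise disjoint simple closed curves in the plane.

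Next I would use these curves to define a $2$-coloring of $V(G)$: they partition $\mathbb{R}^2$ into connected regions, and one can color the regions with two colors (say orange and green) so that any two regions separated by an arc of $F^*$ receive different colors. Such a coloring is well-defined; for instance, color each point by the parity of the number of curves enclosing it. Since no curve of $F^*$ passes through a primal vertex, each $v \in V(G)$ lies in a unique region, and I take the color of that region as the color of $v$.

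Finally I would verify that the resulting $2$-coloring is a guard coloring, i.e.\ that every face has vertices of both colors on its boundary and at least one monochromatic edge. For a face $f = v_1v_2v_3v_4$, its dual vertex $f^*$ has degree exactly~$2$ in $F^*$, so exactly two of the four boundary edges of $f$ are crossed by $F^*$ and the other two are not. A short case distinction according to whether the two crossed edges are opposite or adjacent on the boundary of $f$ yields cyclic color patterns of the form $X,Y,Y,X$ or $X,Y,X,X$ respectively, both of which use both colors and contain at least one monochromatic edge. The main difficulty I foresee is exactly this last step: one must argue that traversing an uncrossed boundary edge keeps a vertex within the same region (hence the same color) while traversing a crossed edge switches regions, and then read off the global color pattern from the local configuration of $F^*$ at each face. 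Petersen's theorem itself enters only as a black box.
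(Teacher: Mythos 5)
Your proposal is correct and follows essentially the same route as the paper's proof: a $2$-factor of the $4$-regular dual via Petersen's theorem, vertex colors by parity of enclosing cycles, and the observation that exactly two of the four boundary edges of each face are crossed (hence bichromatic) while the other two are monochromatic. The final case distinction you flag as the main difficulty is exactly the paper's closing argument and goes through as you describe.
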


\begin{proof}
  Let~$G$ be a quadrangulation.
  We show that there is a guard coloring for~$G$, which is sufficient by Lemma~\ref{lem:guardColoring}.
  Consider the dual graph~$G^* = (V^*, E^*)$ of~$G$ with its inherited plane embedding, so each vertex~$f^* \in V^*$ is placed inside the face~$f$ of~$G$ corresponding to it.
  Since every face of~$G$ is bounded by a~$4$-cycle, its dual graph~$G^*$ is~$4$-regular.
  Using Petersen's $2$\nobreakdash-Factor Theorem~\cite{Petersen1891}\footnote{
    Diestel~\cite[Corollary~2.1.5]{Diestel2016} gives a very short and elegant proof of this theorem in his book.
    He only considers simple graphs there, but all steps in the proof also work for multigraphs like~$G^*$ that have at most two edges between any pair of vertices.
  } we get that~$G^*$ contains a~$2$-factor~$H$ (a spanning~$2$-regular subgraph).
  Therefore~$H$ is a set of vertex-disjoint cycles that might be nested inside each other.
  Now we define a $2$\nobreakdash-coloring~$\mathrm{col} : V \to \{0,1\}$ for the vertices of~$G$:
  For each~$v \in V$ let~$c_v$ be the number of cycles~$\mathcal{C}$ of~$H$ such that~$v$ belongs to the region of the embedding surrounded by~$\mathcal{C}$.
  The color of~$v$ is determined by the parity of~$c_v$ as~\mbox{$\mathrm{col}(v) := c_v \mod 2$}.

  We claim that this yields a guard coloring of~$G$:
  Any edge~$e = ab \in E$ has a corresponding dual edge~$e^*$.
  If~$e^* \in E(H)$, then~$e$ crosses exactly one cycle edge so~$\lvert c_a - c_b \rvert = 1$ and therefore~$\mathrm{col}(a) \neq \mathrm{col}(b)$.
  Otherwise~$e \not\in E(H)$ and its two endpoints are in the same cycles, thus~$\mathrm{col}(a) = \mathrm{col}(b)$ and~$e$ is monochromatic.
  Because~$H$ is a~$2$-factor, each face has exactly two monochromatic edges.
\end{proof}

\noindent
Figure~\ref{fig:quadsUpperBound} shows an example quadrangulation and a~$2$-factor in its dual graph.
By counting how many cycles each vertex lies inside, the vertices were colored in green and orange to obtain a guard coloring.
To complete this section we further note that above proof can be transformed into an efficient algorithm.

\begin{corollary}
  An edge guard set of size~$\lfloor n/3 \rfloor$ for an~$n$-vertex quadrangulation~$G$ can be computed in time~$O(n^{3/2})$.
\end{corollary}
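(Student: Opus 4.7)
The plan is to turn the proof of Theorem~\ref{thm:quadsUpperBound} into an algorithm by implementing each of its steps efficiently. Given a combinatorial description of the plane embedding of $G$, first build the dual multigraph $G^*$ together with its inherited embedding in $O(n)$ time by walking once around each face of $G$. This yields the $4$-regular host graph on which Petersen's $2$-Factor Theorem is to be applied.

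For the $2$-factor $H$ of $G^*$, I would avoid invoking a generic matching routine and exploit that all degrees of $G^*$ are even. An Eulerian circuit in each connected component can be found in $O(n)$ time and, once oriented along that circuit, gives an orientation with in-degree and out-degree $2$ at every vertex. The standard bipartite split on $V^* \times \{+,-\}$ is then $2$-regular, hence a disjoint union of even cycles, and a perfect matching (which, undirected, corresponds to a $2$-factor of $G^*$) is read off in $O(n)$ time by taking alternating edges. Next, compute the coloring $\mathrm{col}$ by a BFS of $G$ rooted at any vertex: for each traversed edge $uv \in E$ with dual edge $e^*$, flip the parity if $e^* \in H$ and copy it otherwise. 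This correctly propagates the parity of $c_v$ (up to a global shift determined by the root) in $O(n)$ total time.

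The final step is to extract an edge guard set of size $\lfloor n/3 \rfloor$ from the guard coloring, for which I would invoke the algorithmic version of Lemma~\ref{lem:guardColoring} due to Bose et al.~\cite{Bose2003}. I expect this to be the main obstacle to a linear-time result: their construction ultimately reduces to a bipartite matching problem on a planar auxiliary graph with $O(n)$ edges, which Hopcroft--Karp solves in $O(n^{3/2})$ time. Since every preceding stage runs in $O(n)$ time, this matching step dominates and yields the overall $O(n^{3/2})$ bound claimed in the corollary.
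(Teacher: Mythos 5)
Your algorithm establishes the claimed $O(n^{3/2})$ bound, but you locate the bottleneck in exactly the wrong place, and your route through the $2$-factor step is genuinely different from (and better than) the paper's. The paper follows Diestel's reduction of Petersen's Theorem to a perfect matching in a bipartite $2$-regular graph and then invokes Hall's Theorem together with the Planar Separator Theorem to find that matching in $O(n^{3/2})$ time; this is where its $n^{3/2}$ term comes from. You instead observe that after the Eulerian orientation the bipartite split is $2$-regular, hence a disjoint union of even cycles, so the perfect matching (and with it the $2$-factor of $G^*$) can be read off in linear time. That observation is correct and strictly improves the paper's treatment of this step. Your BFS propagation of the parity $c_v \bmod 2$ is also sound: in the inherited embedding each primal edge crosses only its own dual edge, so the parity flips exactly when that dual edge lies on a cycle of $H$, and a global colour swap at the root is harmless because the guard-coloring properties are symmetric in the two colours.

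Where you go astray is the final step. The paper states, citing Bose et al., that extracting an edge guard set of size $\lfloor n/3\rfloor$ from a guard coloring takes \emph{linear} time; no bipartite matching is involved there. Your assertion that their construction ``ultimately reduces to a bipartite matching problem'' solved by Hopcroft--Karp in $O(n^{3/2})$ is speculation that contradicts the cited source, so the justification you give for the dominant term is not valid, even though the stated bound (being only an upper bound) survives. Ironically, combining your linear-time $2$-factor computation with the actual linear-time extraction of Bose et al.\ would yield an $O(n)$ algorithm, which is stronger than what the corollary claims.
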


\noindent
Bose et al.~\cite{Bose2003} already describe how to compute a guard set from a guard coloring in linear time, so it remains to check the time needed to obtain the guard coloring.
In his proof of Petersen's~$2$\nobreakdash-Factor Theorem Diestel~\cite{Diestel2016} reduces the problem to find a~$2$\nobreakdash-factor to finding a perfect matching in a bipartite and~$2$\nobreakdash-regular graph~$H$ with order and size linear in~$n$.
A perfect matching~$M$ in~$H$ exists by Hall's Theorem~\cite{Hall1935} and can be computed in time~$O(n^{3/2})$ using the Planar Separator Theorem~\cite{Lipton1980}.
The construction of~$H$ takes linear time and similarly a perfect matching~$M$ of~$H$ can be collapsed into a~$2$-factor of~$G$ in linear time.

In order to bridge the gap between the construction needing~$\lfloor (n-2)/4 \rfloor$ edge guards and the upper bound of~$\lfloor n/3 \rfloor$, we also consider the subclass of~$2$-degenerate quadrangulations in the master's thesis of the first author~\cite[Theorem 5.9]{Jungeblut2019}:

\begin{theorem}
  \label{thm:quads2DegUpperBound}
  Every~$n$-vertex~$2$-degenerate quadrangulation can be guarded by $\lfloor n/4 \rfloor$ edge guards.
\end{theorem}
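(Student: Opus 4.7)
The plan is to prove the claim by induction on~$n$. The base case is $n = 4$: then $G = C_4$ and any single edge already guards both faces, matching $\lfloor 4/4 \rfloor = 1$. I would also verify the small cases $n \leq 7$ (where $\lfloor n/4 \rfloor = 1$) directly, so that the floor arithmetic aligns with the induction.

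The starting structural observation, which I would establish first, is that every $2$-degenerate quadrangulation is a \emph{stacked quadrangulation}: it is obtained from $C_4$ by iterated stackings, where each stacking inserts a new degree-$2$ vertex $v$ into a quadrangular face $F = (u, x_1, w, x_2)$ of the current graph with edges to the two opposite vertices $u$ and $w$, splitting $F$ into $F_1 = (v, u, x_1, w)$ and $F_2 = (v, u, x_2, w)$. Equivalently, removing a degree-$2$ vertex~$v$ of~$G$ merges its two faces back into a single quadrangular face, so $G - v$ is again a $2$-degenerate quadrangulation on $n-1$ vertices. For the inductive step I would pick such a~$v$ with neighbors~$u, w$, apply the hypothesis to $G - v$ to obtain a guard set~$\Gamma'$ of size~$\lfloor (n-1)/4 \rfloor$, and examine the edge of~$\Gamma'$ guarding the merged face~$(u, x_1, w, x_2)$. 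If its endpoint on that face lies in~$\{u, w\}$ then the same edge also guards both~$F_1$ and~$F_2$ in~$G$ and no new guard is needed; if instead it lies in~$\{x_1, x_2\}$ then one must add an extra edge such as~$uv$ to cover the remaining face.

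A naive application of this reduction spends one new guard per removed vertex in the worst case, giving $\lfloor (n-1)/4 \rfloor + 1$ guards and matching $\lfloor n/4 \rfloor$ only when $n \equiv 0, 1 \pmod 4$. To recover the other residue classes, my plan is to strengthen the inductive hypothesis with an invariant that ties the guard of each face to one of its two diagonals (the ``preferred diagonal''), so that peels along the preferred diagonal are free. When no such free peel is available, the obstruction should localise to a bundle of twin degree-$2$ vertices sharing common neighbors~$u$ and~$w$, which is exactly the local structure that drives the lower bound construction~$Q_k$ of Theorem~\ref{thm:quadsLowerBound}: every face incident to the bundle then contains both~$u$ and~$w$, so a single edge like~$uv$ guards them all and up to four vertices can be removed at the cost of a single new guard, restoring the amortised $1 : 4$ ratio.

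The main obstacle I foresee is the precise formulation and propagation of this invariant: one must verify that, after each free peel or amortised bundle reduction, the invariant can be re-established on the newly created merged face, and that the floor arithmetic matches $\lfloor n/4 \rfloor$ exactly in every residue class of $n \bmod 4$. Once that bookkeeping is done, the bound $\lfloor n/4 \rfloor$ follows from a telescoping induction together with the small-case verifications.
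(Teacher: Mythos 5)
The paper itself only sketches a proof of this theorem (deferring details to the first author's thesis): the stated strategy is to find a $4$-cycle with at least four but minimally many interior vertices, delete all of them at once to get a smaller $2$-degenerate quadrangulation $G'$, and extend an inductively obtained $\Gamma'$ by exactly one edge, using quadrangular analogues of the forcing lemmas (Lemmas~\ref{lem:stackedWeakForcingLemma} and~\ref{lem:stackedStrongForcingLemma}) that replace the deleted vertices by a gadget guaranteeing a guard set of $G'$ with prescribed vertices in $V(\Gamma')$. Your skeleton is compatible with this: your observation that every $2$-degenerate quadrangulation arises from $C_4$ by repeatedly inserting a degree-$2$ vertex across a diagonal of a face is correct (a quadrangulation has minimum degree $2$, so $2$-degeneracy yields a degree-$2$ vertex whose removal merges its two faces into a single quadrilateral), and you correctly identify that one new guard must pay for at least four deleted vertices.

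The gap is that the entire content of the theorem lies in that amortisation, and neither of your two mechanisms for it works as stated. First, the ``preferred diagonal'' invariant cannot be formulated from $G-v$ alone: which diagonal of the merged face must carry the guard depends on where the peeled vertex was attached, i.e.\ on information the induction hypothesis for $G-v$ does not see. The face-independent strengthening (``every face is guarded on a vertex of \emph{each} diagonal'') is simply false with $\lfloor n/4\rfloor$ edges --- already a $7$-vertex $2$-degenerate quadrangulation admits a single guarding edge but no single edge hitting both diagonals of every face --- and the look-ahead version is precisely the forcing problem that the paper solves by modifying $G'$ with gadget vertices rather than by strengthening the hypothesis. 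Second, your fallback ``bundle of twin degree-$2$ vertices sharing common neighbours $u,w$'' need not have four members: in the extremal graph $Q_k$ of Theorem~\ref{thm:quadsLowerBound} the only such bundle inside gadget $i$ is $\{b_i,d_i\}$, of size two, and the four vertices that must be deleted together form a two-level structure in which $a_i,c_i$ become degree-$2$ only after $b_i,d_i$ are gone. ``Up to four vertices per new guard'' permits a ratio as bad as $1/2$; you need \emph{at least} four, and you give no argument that a failed free peel always exposes a deletable cluster of that size, nor that the resulting graph after such a deletion is still a quadrangulation on which the floor arithmetic closes in every residue class. Until the amortisation step is made precise --- realistically by proving quadrangular analogues of the two forcing lemmas and a case analysis over the configurations inside a minimal $4$-cycle with at least four interior vertices --- this remains a plan rather than a proof.
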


\noindent
Note that this bound is best possible, as the quadrangulations constructed in Theorem~\ref{thm:quadsLowerBound} are~$2$-degenerate.
The proof of Theorem~\ref{thm:quads2DegUpperBound} follows the same lines as the one we present in the following section for stacked triangulations\footnote{
  For every~$n$-vertex $2$-degenerate quadrangulation~$G$ there is an~$(n - k)$-vertex $2$\nobreakdash-degenerate quadrangulation~$G'$ ($k \geq 4$), such that an edge guard set~$\Gamma'$ for~$G'$ can be used to construct an edge guard set~$\Gamma$ for~$G$ with~$\lvert \Gamma \rvert = \lvert \Gamma' \rvert + 1$.
  Obviously different cases need to be considered compared to stacked triangulations and analogous versions of Lemma~\ref{lem:stackedWeakForcingLemma} and Lemma~\ref{lem:stackedStrongForcingLemma} for quadrangular faces are needed.
  But apart from that the proof strategy is the same.
}.
Just as for stacked triangulations the proof can easily be turned into a linear time algorithm.

\section{Stacked Triangulations}
\label{sec:stacked}

The stacked triangulations (also known as Apollonian networks, maximal planar chordal graphs or planar~$3$\nobreakdash-trees) are a subclass of the triangulations that can recursively be formed by the following two rules:
\begin{enumerate*}[label=(\roman*)]
  \item A~triangle is a stacked triangulation and
  \item if~$G$ is a stacked triangulation and~$f = (x,y,z)$ an inner face, then the graph obtained by placing a new vertex~$v$ into~$f$ and connecting it with all three boundary vertices is again a stacked triangulation.
\end{enumerate*}
\begin{definition}
  For a stacked triangulation~$G$ we define~$\mathrm{height}(G)$ as
  \[
    \mathrm{height}(G) :=
    \begin{cases}
      0 & \text{if $\lvert G \rvert = 3$} \\
      1 + \max \{\mathrm{height}(G_1), \mathrm{height}(G_2), \mathrm{height}(G_3)\} & \text{otherwise}
    \end{cases}
  \]
  where~$G_1,G_2,G_3$ are the stacked triangulations induced by~$(v,x,y)$, $(v,y,z)$, $(v,z,x)$ and their interior vertices, respectively.
\end{definition}
\noindent
The stacked triangulations are a non-trivial subclass of the triangulations and we shall prove that they need strictly less than~$\lfloor n/3 \rfloor$ edge guards (which is the best known upper bound for general triangulations).
To start, we present a family of stacked triangulations needing many edge guards allowing us to conclude that the upper bound presented later is tight.

\begin{theorem}
  For even~$k \in \mathbb{N}$ there is a stacked triangulation~$G_k$ with~$n = (7k+4)/2$ vertices needing at least~$k = (2n - 4)/7$ edge guards.
\end{theorem}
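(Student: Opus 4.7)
The plan is to construct $G_k$ iteratively, attaching $k/2$ disjoint \emph{pod} gadgets to a small base stacked triangulation. Each pod is a stacked triangulation glued to the rest of $G_k$ along a triangular face and contributes $7$ new vertices (so that, together with a $4$-vertex base, the total is $(7k+4)/2$), while forcing $2$ additional edge guards. The pod itself is built as a ``flower'': given a triangular face $(p,q,r)$ of the current graph, stack a center vertex $v$ into the face, then stack three leaves $a,b,c$ into each of the three sub-faces around $v$, and finally augment with a few auxiliary stacks into selected sub-sub-faces to reach exactly $7$ new vertices per pod. A direct case analysis of the $7$-vertex flower (which has $10$ triangular faces) shows that no single edge $e = uv$ can simultaneously be incident to all $10$ face boundaries, since any such edge fails to meet some face whose three boundary vertices avoid both $u$ and $v$; consequently the flower needs at least $2$ guards.

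For the assembly, I would start from a base triangle plus one extra stack (four vertices, one face available for a pod) and attach $k/2$ pods into pairwise disjoint triangular faces of the growing graph, each time using the boundary $(p,q,r)$ of the chosen face as the outer triangle of the next flower. The key structural property to arrange is that each pod is connected to the rest of $G_k$ only through the three vertices of its outer triangle, so the interior vertices of distinct pods are vertex-disjoint and no edge of $G_k$ runs directly between two pod interiors. This mirrors the role of the vertices $s,t$ separating the $4$-cycles in the quadrangulation construction of Theorem~\ref{thm:quadsLowerBound}.

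To lower bound the required guard set, I would exhibit, inside each pod $i$, two interior faces $f_1^{(i)}$ and $f_2^{(i)}$ whose boundary vertices are all strictly interior to the pod (in particular, they avoid the outer triangle $\{p,q,r\}$ of the pod, which is the only interface with the rest of $G_k$). I then assign to each edge $e \in \Gamma$ at most one pod: if $e$ has an endpoint on $\partial f_1^{(i)} \cup \partial f_2^{(i)}$ it is assigned to pod $i$, otherwise it is unassigned. Because the chosen ``deep'' boundaries are contained in the interior of their pod, and the interiors of different pods are vertex-disjoint and share no edge, no edge is assigned to more than one pod. A local analysis (the flower argument, carried out inside the pod) then shows that each pod needs at least two edges assigned to it to cover both $f_1^{(i)}$ and $f_2^{(i)}$, yielding $|\Gamma| \geq 2 \cdot (k/2) = k$.

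The main obstacle is the final local step: while a single flower needs two guards, an edge with an endpoint on the outer triangle $\{p,q,r\}$ can reach many interior faces at once, and such edges might still come from outside the pod. The role of the auxiliary stacks inside each pod is precisely to create two ``deep'' faces $f_1^{(i)}, f_2^{(i)}$ whose three boundary vertices all lie among the leaves and auxiliary vertices of the pod, so that no external edge or edge touching $\{p,q,r\}$ meets them. Verifying that these two deep faces still cannot both be covered by a single pod-internal edge reduces to an enumeration of the degree-$3$ leaves and their incident faces in the augmented flower, completing the argument.
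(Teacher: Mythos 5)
There is a genuine gap in the local step, and it is exactly the one you flag as ``the main obstacle'' without resolving it. Your charging argument needs each pod to contain two faces $f_1^{(i)}, f_2^{(i)}$ that no \emph{single} edge of $G_k$ can guard simultaneously, i.e.\ whose boundaries are vertex-disjoint \emph{and} joined by no edge. The pod you describe (center $v$, three leaves $a,b,c$ stacked into the three sub-faces around $v$, plus three auxiliary vertices) provably cannot contain such a pair. Any face of the pod avoiding the interface $\{p,q,r\}$ lies inside one sub-triangle of $v$ and, with only three auxiliaries available, must contain either $v$ or a leaf: a deep face avoiding $v$ costs two auxiliaries chained below one leaf (e.g.\ $(h,d,a)$) and still contains that leaf, which is adjacent to $v$; the single remaining auxiliary can only produce a second deep face of the form $(d',\ell,v)$, which contains $v$. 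So any two candidate faces either share a vertex or are joined by an edge such as $av$, and one edge guards both. Hence each pod forces only one guard, not two, and your bound degrades to $k/2$. (Your separate observation that the $7$-vertex flower needs two edge guards \emph{as a standalone graph} does not help, since every face of the flower touches $\{p,q,r\}$ and can be guarded from outside the pod.) A correct $7$-interior-vertex pod does exist --- take only two leaves $a,b$ in two different sub-triangles and chain two auxiliaries below each, yielding deep faces $(g,e,a)$ and $(g',e',b)$ in different sub-triangles of $v$ with no edge between their boundaries --- but that is not the gadget you specified. Even then your vertex count is off: a constant-size base plus $k/2$ pods of $7$ vertices gives $n=(7k+8)/2$ rather than $(7k+4)/2$, i.e.\ a lower bound of $(2n-8)/7$.

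The paper avoids all of this by dispensing with the two-faces-per-pod bookkeeping: it starts from a base stacked triangulation $S$ with $k$ faces (hence $(k+4)/2$ vertices) and inserts into \emph{every} face $f$ of $S$ just three new vertices forming one deep triangular face $t_f$ disjoint from $\partial f$. Distinct deep faces are automatically at distance at least $2$ because each triple of new vertices attaches only to its own host face, so $k$ guards are forced with exactly $(7k+4)/2$ vertices. If you repair your pod as above you recover the theorem up to an additive constant, but the one-deep-face-per-host-face construction is both simpler and sharper.
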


\begin{proof}
  \begin{figure}
    \centering
    \includegraphics{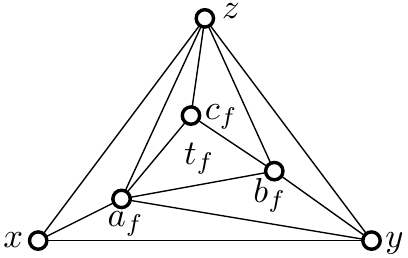}
    \caption{
      Three new vertices~$a_f,b_f,c_f$ are stacked into face~$f=(x,y,z)$ forming a new face~$t_f$.
      Note that the graph remains a stacked triangulation.
    }
    \label{fig:stackedLowerBound}
  \end{figure}
  Let~$S$ be a stacked triangulation with~$k$ faces and therefore~$(k+4)/2$ vertices (by Euler's formula).
  Insert three new vertices~$a_f,b_f,c_f$ into each face~$f$ of~$S$ such that the resulting graph is still a stacked triangulation and these three vertices form a new triangular face~$t_f$, i.e.~$f$ and~$t_f$ do not share any boundary vertices.
  Figure~\ref{fig:stackedLowerBound} illustrates how the new vertices can be inserted into a single face~$f$.
  Then~$G$ has~$n = (k+4)/2 + 3k = (7k+4)/2$ vertices.
  For any two distinct faces~$f,g$ of~$S$ let~$P$ be a shortest path between any two boundary vertices of the new faces~$t_f$ and~$t_g$.
  By our construction~$P$ has length at least~$2$, so no edge can guard both~$t_f$ and~$t_g$.
  Therefore~$G$ needs at least~$k$ edge guards.
\end{proof}

\noindent
Complementing this construction we state the following upper bound.

\begin{theorem}
  \label{thm:stackedUpperBound}
  Every~$n$-vertex stacked triangulation with~$n \geq 4$ can be guarded by~$\lfloor 2n/7 \rfloor$ edge guards.
\end{theorem}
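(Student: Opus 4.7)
The plan is to proceed by induction on the number of vertices $n$. The base cases are the stacked triangulations on $n \in \{4,5,6,7\}$ vertices: in each such case $\lfloor 2n/7 \rfloor \geq 1$, and one verifies directly that a single edge incident to a well-chosen vertex of the initial triangle covers every face.

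For the inductive step with $n \geq 8$ the strategy is to locate a \emph{reducible local configuration} $H$ of $k$ vertices near a leaf of the stacking tree of $G$, such that $G' := G - V(H)$ is again a stacked triangulation on $n-k$ vertices. By the induction hypothesis, $G'$ admits an edge guard set $\Gamma'$ of size $\lfloor 2(n-k)/7 \rfloor$, and I would then extend $\Gamma'$ by at most $g$ new edges so as to guard the faces of $G$ incident to vertices of $H$. Since $\lfloor 2n/7 \rfloor - \lfloor 2(n-k)/7 \rfloor \geq \lfloor 2k/7 \rfloor$, it suffices to guarantee $g \leq \lfloor 2k/7 \rfloor$ in every case.

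The prototypical reducible configuration is a \emph{flower}: a vertex $u$ together with three degree-$3$ children $a,b,c$, one inserted into each of $u$'s three faces. Removing $\{u,a,b,c\}$ restores $u$'s parent face $(x,y,z)$ in $G'$; the edge of $\Gamma'$ that guards $(x,y,z)$ has an endpoint in $\{x,y,z\}$ and thus covers three of the nine newly-created faces of $G$, while a single well-chosen additional edge incident to $u$ covers the remaining six. This yields $k = 4$ and $g = 1$, matching the required ratio. When no flower is available---for instance in a long nested chain of stacked vertices---one must instead peel off a deeper substructure with $k = 7$ and $g \leq 2$. In all such cases, the Weak Forcing Lemma (Lemma~\ref{lem:stackedWeakForcingLemma}) and Strong Forcing Lemma (Lemma~\ref{lem:stackedStrongForcingLemma}) are used to argue that $\Gamma'$ may be chosen so that its edges near the restored face(s) of $G'$ already cover as many of the new faces of $G$ as possible, keeping $g$ within budget.

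The main obstacle will be the case analysis at the bottom of the stacking tree: depending on whether a deepest non-leaf $u$ has one, two, or three leaf children, and on whether any of $u$'s remaining subtrees is empty or further stacked, one obtains different values of $k$. Cases with $k \leq 3$ fail the ratio test on their own (as $\lfloor 6/7 \rfloor = 0$), so the delicate part is to combine such small configurations with siblings or with the grandparent's context, invoking the forcing lemmas to carry one end of the ``free'' guard from $\Gamma'$ across the boundary of the reduction and into $H$. Once every case produces an $H$ with $g \leq \lfloor 2k/7 \rfloor$, the induction closes and the bound $\lfloor 2n/7 \rfloor$ follows.
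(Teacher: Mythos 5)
Your overall strategy---induction on $n$, peeling a small configuration off the bottom of the stacking tree, and using the forcing lemmas to bias the guard set $\Gamma'$ of the reduced graph---is exactly the strategy of the paper. However, two concrete problems remain. First, your base case claim for $n=7$ is false: if one vertex is stacked into each of the three faces of a $K_4$, the resulting $7$-vertex stacked triangulation cannot be guarded by a single edge. (Check the three faces $(a,x,y)$, $(b,y,z)$, $(c,z,x)$ created by the three stacked vertices $a,b,c$ together with the three faces $(a,v,x)$-type around the hub $v$; no pair of adjacent vertices meets all ten faces.) This is harmless for the theorem since $\lfloor 14/7\rfloor = 2$, but the verification you describe does not go through as stated, and the same $7$-vertex graph is precisely the configuration the paper must handle carefully in its Lemma~\ref{lem:stacked7}.

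Second, and more seriously, the heart of the proof is missing: you never establish that a reducible configuration with $g \leq \lfloor 2k/7\rfloor$ always \emph{exists}, you only assert that the case analysis must be made to work. The paper's key structural device is to choose a triangle $\bigtriangleup$ that is \emph{minimal} subject to containing at least four interior vertices; minimality forces each of the three subtriangles around the hub $v$ to contain at most three interior vertices, so the whole configuration has at most ten interior vertices and the problem reduces to a finite list of $8$-, $9$- and $10$-vertex local pictures (Lemmas~\ref{lem:stacked2Vertex5}--\ref{lem:stacked2Vertex7}), plus a height-$\leq 3$ case and a splitting argument achieving $k\geq 7$, $\ell=2$. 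Your alternative organizing principle (the deepest non-leaf $u$ and its leaf children) does not by itself bound the size of the configuration you must analyze, and the cases you flag as ``delicate'' ($k\leq 3$, combining with siblings or the grandparent's context) are exactly where the ratio argument can fail without such a bound. A further bookkeeping point: when the forcing lemmas are invoked, $G'$ is not $G - V(H)$ but $G - V(H)$ with two or three gadget vertices re-inserted, so $k$ must be counted as the \emph{net} decrease in order; your accounting should be adjusted accordingly.
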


\noindent
Before going into a detailed proof, let us start with a high-level description of the proof strategy.
We use induction on the number~$n$ of vertices.
Given a stacked triangulation~$G$ we create a smaller stacked triangulation~$G'$ of size~$\lvert G' \rvert = \lvert G \rvert - k$ for some~$k \in \mathbb{N}$.
Applying the induction hypothesis on~$G'$ yields an edge guard set~$\Gamma'$ of size~$\lvert \Gamma' \rvert \leq \lfloor 2(n-k)/7\rfloor$.
Then we extend~$\Gamma'$ into an edge guard set~$\Gamma$ for~$G$ using~$\ell$ additional edges.
In each step we guarantee~$\ell/k \leq 2/7$, such that~$\lvert \Gamma \rvert = \lvert \Gamma' \rvert + \ell \leq \lfloor 2(n-k)/7\rfloor + 2k/7 = \lfloor 2n/7 \rfloor$.

We create~$G'$ by choosing a triangle~$\bigtriangleup$ and removing the set of vertices in its interior.
Call this set~$V^-$.
Note that~$G'$ is still a stacked triangulation.
Under all possible candidates we choose~$\bigtriangleup$ such that~$V^-$ is of minimal cardinality but consists of at least four elements.
By the choice of~$\bigtriangleup$ we get that~$G[\bigtriangleup \cup V^-]$ has at most ten inner vertices:
The triangle~$\bigtriangleup$ consists of three vertices~$x,y,z$ and there is a unique vertex~$v$ in its interior adjacent to all three of them.
The remaining vertices of~$G[\bigtriangleup \cup V^-]$ are distributed along the three triangles~$(v,x,y)$,~$(v,y,z)$ and~$(v,z,x)$.
None of them can contain more than three vertices in its interior, otherwise it would be a triangle~$\bigtriangleup'$ that would have been chosen instead of~$\bigtriangleup$.

Now that we have a bound on the size of~$G[\bigtriangleup \cup V^-]$, we systematically consider edge guard sets for small stacked triangulations.
This requires to consider many cases and the work is distributed among several observations and lemmas:
Observations~\ref{obs:doublyTriply} and~\ref{obs:stacked3Wheel} as well as Lemmas~\ref{lem:stacked6} and~\ref{lem:stacked7} are key insights about small stacked triangulations and are all used several times later on.
On the other hand, Lemmas~\ref{lem:stacked2Vertex5},~\ref{lem:stacked2Vertex6} and~\ref{lem:stacked2Vertex7} are special cases emerging in the proof of Theorem~\ref{thm:stackedUpperBound} and are proved in isolation beforehand for a simpler and more modular chain of arguments.
These three lemmas themselves are based on techniques to restrict the edge guard set~$\Gamma'$ given by the induction hypothesis which are developed in Lemmas~\ref{lem:stackedWeakForcingLemma} and~\ref{lem:stackedStrongForcingLemma}.
All this groundwork is then ultimately combined into a relatively simple proof of Theorem~\ref{thm:stackedUpperBound}.

\begin{observation}
  \label{obs:doublyTriply}
  Let~$f = (x,y,z)$ be a face of a stacked triangulation and let~$\Gamma$ be an edge guard set.
  Now we add a new vertex~$v$ into~$f$ with edges to all~$x,y,z$.
  \begin{itemize}
    \item If~$\lvert V(\Gamma) \cap \{x,y,z\} \rvert \geq 2$, we say that~$f$ is \emph{doubly} guarded.
    In this case the three new faces~$(v,x,y)$, $(v,y,z)$ and~$(v,z,x)$ are all guarded.
    \item If~$\lvert V(\Gamma) \cap \{x,y,z\} \rvert = 3$, we say that~$f$ is \emph{triply} guarded.
    Furthermore the three new faces~$(v,x,y)$, $(v,y,z)$ and~$(v,z,x)$ are all doubly guarded.
  \end{itemize}
\end{observation}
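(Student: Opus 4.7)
The plan is a direct pigeonhole argument on the boundary vertices of the three new faces, with no induction or heavy machinery needed. The key structural fact is that each of the new faces $(v,x,y)$, $(v,y,z)$, $(v,z,x)$ has exactly two of $\{x,y,z\}$ on its boundary (the third boundary vertex being the newly inserted $v$, which is not in $V(\Gamma)$).

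For the doubly guarded case, I would assume without loss of generality that the guarded vertices of $\{x,y,z\}$ are two specific ones, say $x$ and $y$ (the other two sub-cases are symmetric). Then I would check each of the three new faces in turn and observe that each contains at least one of $x,y$ on its boundary: $(v,x,y)$ contains both, while $(v,y,z)$ contains $y$ and $(v,z,x)$ contains $x$. Hence each new face is guarded by whatever edge in $\Gamma$ already guards $x$ or $y$ in~$G$.

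For the triply guarded case, since all of $x,y,z$ lie in $V(\Gamma)$, every pair of them does as well. Because each of the three new faces has exactly two vertices from $\{x,y,z\}$ on its boundary, each new face has two boundary vertices in $V(\Gamma)$ and is therefore doubly guarded, as claimed.

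I do not expect any real obstacle here: the statement is essentially a bookkeeping observation about which of $x,y,z$ appear on the boundary of each of the three sub-faces created by stacking $v$ into $f$, and the proof is just a short case distinction using the definition of ``(doubly) guarded'' from the observation itself. The only thing to be slightly careful about is that the definition of ``doubly guarded'' is imposed on the face $f$ before stacking, so one must explicitly note that guarding is inherited: any edge of $\Gamma$ with an endpoint in $\{x,y,z\}$ still has that endpoint on the boundary of the new faces after $v$ is inserted.
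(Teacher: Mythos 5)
Your argument is correct and is exactly the bookkeeping the paper intends: the observation is stated without an explicit proof precisely because each new face contains two of $\{x,y,z\}$ and omits only one, so the pigeonhole count you give is the whole content. Nothing further is needed.
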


\begin{observation}
  \label{obs:stacked3Wheel}
  Let~$G$ be a stacked triangulation, and~$v$ be a vertex of degree~$3$ with neighbors~$x,y,z$.
  Then for any edge guard set~$\Gamma$ we have~$\lvert \{v,x,y,z\} \cap V(\Gamma) \rvert \geq 2$.
  If~$v \not\in V(\Gamma)$, then at least two of~$x,y,z$ must be in~$V(\Gamma)$, because each of them is incident to only two of the three faces inside~$(x,y,z)$.
  But if~$v \in V(\Gamma)$, it must be as part of an edge with one of its neighbors.
\end{observation}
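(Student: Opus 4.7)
The plan is to establish $\lvert \{v,x,y,z\} \cap V(\Gamma) \rvert \geq 2$ by splitting on whether $v$ itself lies in $V(\Gamma)$. The key structural fact I would use is that, since $\deg(v) = 3$ inside the triangle $(x,y,z)$, the only faces touching $v$ are the three inner faces $(v,x,y)$, $(v,y,z)$, $(v,z,x)$, and each of $x,y,z$ is incident to exactly two of these three faces.

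First, I would handle the case $v \in V(\Gamma)$. Then $v$ must appear as an endpoint of some edge $e \in \Gamma$. Since every neighbor of $v$ in $G$ lies in $\{x,y,z\}$, the other endpoint of $e$ must be one of $x$, $y$, $z$. Hence $V(\Gamma)$ already contains $v$ together with one of $x,y,z$, yielding $\lvert \{v,x,y,z\} \cap V(\Gamma) \rvert \geq 2$.

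Next, I would handle the case $v \notin V(\Gamma)$. Each of the three faces $(v,x,y), (v,y,z), (v,z,x)$ must be guarded, and since $v$ is unavailable, each of these faces must have a boundary vertex from $\{x,y,z\}$ in $V(\Gamma)$. Using the incidence observation above, $x$ guards only $(v,x,y)$ and $(v,z,x)$, and symmetrically for $y$ and $z$; so no single vertex of $\{x,y,z\}$ can cover all three faces. Therefore at least two of $x,y,z$ must lie in $V(\Gamma)$, again giving the desired bound.

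There is no real obstacle here; the only point requiring care is the incidence check showing that each of $x,y,z$ misses exactly one of the three inner faces, which is what forces the second guard in the case $v \notin V(\Gamma)$. Combining the two cases completes the proof.
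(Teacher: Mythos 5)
Your proof is correct and follows exactly the argument sketched in the paper's own observation: a case split on whether $v \in V(\Gamma)$, using that $v$'s only neighbors are $x,y,z$ in the first case and that each of $x,y,z$ is incident to only two of the three inner faces in the second. No differences worth noting.
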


\begin{lemma}
  \label{lem:stacked6}
  Let~$G$ be a~$6$-vertex stacked triangulation.
  Then~$G$ can be guarded by a unique edge guard.
  Further, if there is a vertex guard at an outer vertex~$x$ of~$G$, then there is an edge~$ab$ guarding the remaining faces where~$a \neq x$ is another outer vertex.
\end{lemma}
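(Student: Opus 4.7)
The plan is to enumerate all $6$-vertex stacked triangulations up to isomorphism and verify both statements case by case. Starting from the outer triangle $(x,y,z)$, the recursive definition forces the first stacked vertex $v_1$ to be placed inside it, and by the rotational symmetry of the resulting $4$-vertex wheel we may assume $v_2$ is stacked inside $(v_1,x,y)$. The resulting $5$-vertex graph has a reflection symmetry swapping $x$ and $y$, so the five current inner faces collapse to three orbit representatives. This leaves exactly three isomorphism classes of $6$-vertex stacked triangulations, distinguished by the face into which $v_3$ is stacked: $(v_1,y,z)$, $(v_2,x,y)$, or $(v_2,v_1,x)$.

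For the first statement I use a degree-counting shortcut rather than trying candidate edges one by one. By Euler's formula a $6$-vertex triangulation has exactly $8$ faces. For any edge $ab \in E$ the two faces incident to $ab$ are counted twice in $\deg(a) + \deg(b)$, so $ab$ is incident to exactly $\deg(a) + \deg(b) - 2$ distinct faces. A single guarding edge therefore requires $\deg(a) + \deg(b) \geq 10$, which in a $6$-vertex graph forces $\deg(a) = \deg(b) = 5$. A vertex of degree $5$ is adjacent to every other vertex, so any two such vertices are automatically adjacent. Writing out the degree sequence in each of the three cases yields $(3,3,4,4,5,5)$ every time, so exactly two vertices attain degree $5$; the edge between them is the only candidate, and the equality $\deg(a) + \deg(b) - 2 = 8$ shows that it indeed covers every face. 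This yields both existence and uniqueness of the single edge guard at once.

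For the second statement, fix a vertex guard at an outer vertex $x$, so that the faces still to be guarded are exactly those not incident to $x$. For each of the three graphs and each of the (up to symmetry) two or three choices of outer vertex holding the guard, I would list the at most five remaining faces, intersect their vertex sets, and pick a vertex $b$ in the intersection that is adjacent to another outer vertex $a \neq x$; the edge $ab$ is then the required guard, with $b$ typically turning out to be either $v_1$ or the third outer vertex. The only real obstacle is the bookkeeping for this second part, since the first part is essentially handed to us by the degree-sum observation; each of the few remaining subcases reduces to a quick intersection of two or three triangular face boundaries and a single adjacency check in $G$.
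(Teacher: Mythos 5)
Your enumeration of the three cases matches the paper's (which inspects the three configurations in a figure), and your degree-counting argument for the first claim is a genuinely different and cleaner route: the paper merely observes from the picture that only one edge works, whereas your observation that an edge $ab$ of a triangulation guards exactly $\deg(a)+\deg(b)-2$ of the $8$ faces, forcing $\deg(a)=\deg(b)=5$, gives uniqueness and existence in one stroke and is easily checkable. One terminological slip: the three cases are not three isomorphism classes but three plane embeddings (choices of outer face, essentially) of the \emph{unique} $6$-vertex stacked triangulation --- your own computation shows the degree sequence is always $(3,3,4,4,5,5)$, and the two degree-$5$ vertices dominate everything while the remaining four vertices must induce a path, so the abstract graph is determined. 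This is harmless for the proof, since the second claim genuinely depends on which face is outer.

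The gap is in the second claim, which you defer as ``bookkeeping'' and for which the procedure you describe would actually fail. You propose to list the faces not incident to $x$, \emph{intersect their vertex sets}, and pick a vertex $b$ in that intersection adjacent to an outer vertex $a\neq x$. In several subcases that intersection is empty: for instance, in the embedding where the two degree-$5$ vertices are the outer vertices $x$ and $y$ (your case ``$v_3$ stacked into $(v_2,x,y)$'') with the vertex guard at $z$, the five remaining faces have no common vertex, and the edge $xy$ guards them only because \emph{both} of its endpoints contribute. The same happens in the other two embeddings when the guard sits at the outer vertex of degree $3$ or $4$ opposite the deeper stacking. So the correct finite check is to exhibit, for each of the roughly eight (embedding, guarded-outer-vertex) pairs, an edge $ab$ with $a$ an outer vertex such that every remaining face meets $\{a,b\}$ --- the edge between the two degree-$5$ vertices, or an edge from an outer degree-$5$ vertex to the other high-degree vertex, always works --- rather than a single vertex covering everything. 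The statement is true and the case analysis does go through, but as written your verification step is not the right one and the cases are not actually carried out.
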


\begin{proof}
  There is only a single~$6$-vertex stacked triangulation and it has three substantially different planar embeddings, all shown in Figure~\ref{fig:stacked6}.
  We see that there is indeed only one possible edge guard~$vw$ in all three cases.
  Further both~$v$ and~$w$ are adjacent to all three outer vertices (or are one of them and adjacent to the other two).
  \begin{figure}
    \centering
    \includegraphics{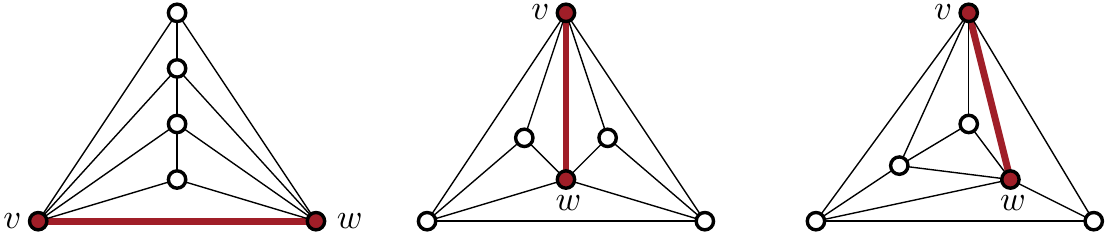}
    \caption{
      The three different planar embeddings of the unique~$6$-vertex stacked triangulation.
      The thick red edge is the unique edge guard for all eight faces.
    }
    \label{fig:stacked6}
  \end{figure}
\end{proof}

\begin{lemma}
  \label{lem:stacked7}
  Let~$G$ be a~$7$-vertex stacked triangulation with a vertex guard at an outer vertex.
  Then one additional edge suffices to guard the remaining faces of~$G$.
\end{lemma}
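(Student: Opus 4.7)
The plan is to analyze the six-vertex graph $H := G \setminus \{x\}$ obtained by deleting~$x$ and every edge incident to~$x$. Its inner faces are exactly the faces of~$G$ not incident to~$x$, and its outer face is bounded by the link cycle~$C_x$ of~$x$ in~$G$, which has length $d := \deg(x) \in \{3,4,5,6\}$. Setting $m := 6 - d$, a short induction on the stacking sequence of~$G$ shows that~$H$ is a stacked triangulation of the $d$-gon~$C_x$ with~$m$ interior vertices: a stacking of~$G$ into an $x$-incident face inserts a new vertex into the polygon~$C_x$, whereas a stacking into a non-$x$-incident face corresponds to a stacking into an interior triangle of~$H$. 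Since the vertex guard at~$x$ already covers every face of~$G$ incident to~$x$, proving the lemma reduces to exhibiting a single edge of~$H$ incident to every inner face of~$H$.

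I would split into four subcases according to the value of~$d$. For $d=3$, $H$ is itself a $6$-vertex stacked triangulation with outer triangle~$C_x$, and Lemma~\ref{lem:stacked6} supplies a single edge guarding every face of~$H$. For $d=6$, $H$ is a triangulated hexagon with four inner triangles; inspecting the triangulations of the hexagon up to symmetry shows that in every case some edge of~$H$ is incident to all four triangles (a fan vertex works for fan triangulations, and the ``middle'' chord works for the zig-zag and star duals). For $d=5$, every triangulation of the pentagon~$C_x$ is a fan from some vertex~$v$, so each of the three base triangles contains~$v$; the unique interior vertex~$u$ of~$H$ is stacked into one of these base triangles and is hence adjacent to~$v$, and the edge~$uv$ then guards every inner face (each base triangle contains~$v$, and each of the three sub-triangles created by stacking~$u$ contains~$u$ or~$v$). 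For $d=4$, the $4$-gon~$C_x$ has a unique chord, and a small split on whether the two stacked vertices land in the same or in different base triangles identifies either the chord itself or an edge from one of its endpoints to the deepest stacked vertex as an explicit guarding edge.

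The main obstacle is the finite case analysis for $d \in \{4,5\}$; the cases $d \in \{3,6\}$ are essentially immediate from Lemma~\ref{lem:stacked6} and from direct inspection of polygon triangulations. Because $m \le 3$ and the base polygons are small, only a handful of nonequivalent configurations arise, and each can be resolved by exhibiting a concrete guarding edge. Observations~\ref{obs:doublyTriply} and~\ref{obs:stacked3Wheel} can be invoked to cover several subcases uniformly. The overall argument thus reduces the $7$-vertex statement to Lemma~\ref{lem:stacked6} together with elementary bookkeeping inside small stacked polygon-triangulations.
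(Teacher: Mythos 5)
Your proposal is correct in its overall strategy but takes a genuinely different route from the paper. The paper fixes the unique common neighbour~$v$ of the three outer vertices and cases on how the remaining three vertices are distributed among the triangles~$(v,x,y)$, $(v,y,z)$, $(v,z,x)$ (splits~$(3,0,0)$, $(2,1,0)$, $(1,1,1)$), then sub-cases on \emph{which} outer vertex carries the guard, each time extracting the extra edge from Lemma~\ref{lem:stacked6} together with its refinement that the guarding edge can be chosen to touch prescribed outer vertices. You instead delete the guarded vertex and analyse the six-vertex triangulated $\deg(x)$-gon~$H$ that remains; this correctly identifies the faces still to be guarded with the inner faces of~$H$ (the outer face of~$G$ and every $x$-incident face is already covered), and it buys a cleaner symmetrisation, since you never have to distinguish which outer vertex is guarded. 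Your cases $d=3$ (Lemma~\ref{lem:stacked6}), $d=6$ (the edge shared by the two middle triangles of a path-dual, or any edge of the central triangle of a star-dual, meets all four triangles) and $d=5$ (pentagon triangulations are fans, and the interior vertex is adjacent to the apex) all check out.

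One concrete slip in the case $d=4$: let~$pq$ be the chord, let both interior vertices lie in the base triangle~$(p,q,r)$, with~$u_1$ adjacent to~$p,q,r$ and~$u_2$ stacked into~$(u_1,q,r)$. Then the chord~$pq$ misses the face~$(u_2,r,u_1)$, and the edge~$qu_2$ to the \emph{deepest} stacked vertex misses the face~$(u_1,r,p)$, so neither of the edges you name works. The correct choice is~$qu_1$ (more generally: the chord endpoint that is a corner of the sub-triangle containing~$u_2$, joined to the \emph{shallower} interior vertex~$u_1$); one checks it meets all five faces inside~$(p,q,r)$ as well as the empty base triangle~$(p,q,s)$. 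With that substitution the case analysis closes and the argument is complete, at essentially the same length as the paper's.
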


\begin{proof}
  Let~$(x,y,z)$ be the outer face of~$G$ and~$v$ be the unique vertex adjacent to all three of them.
  We distinguish the three different cases shown in Figure~\ref{fig:stacked7} on how the remaining three vertices are distributed.
  \begin{figure}
    \centering
    \includegraphics{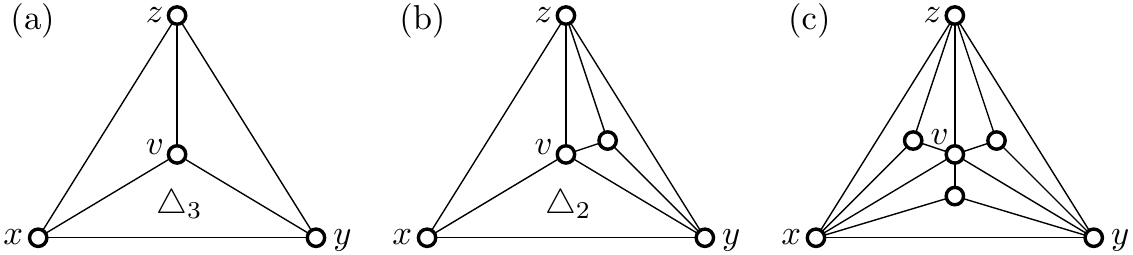}
    \caption{
      The three ways how a~$7$-vertex stacked triangulation can look like.
      By~$\bigtriangleup_k$ we denote all possibilities to insert~$k$ vertices such that the graph is a stacked triangulation.
    }
    \label{fig:stacked7}
  \end{figure}

  \begin{description}
    \item[Case (a):]
    If~$z$ is given as a vertex guard, the remaining faces can be guarded by the unique edge guarding the induced~$6$-vertex stacked triangulation bounded by~$(v,x,y)$.
    If otherwise without loss of generality~$x$ is given as a vertex guard, there is an edge that guards the remaining faces inside triangle~$(v,x,y)$ and contains~$v$ or~$y$.
    In both cases this edge exists by Lemma~\ref{lem:stacked6}.

    \item [Case (b):]
    If~$x$ is guarded, we can use edge~$vy$ and if~$y$ is guarded we can use edge~$vx$.
    In both cases, triangle~$(v,y,z)$ is doubly guarded and~$(v,x,y)$ is triply guarded, so all interior faces are guarded.
    In the remaining case that~$z$ is guarded, there is a unique edge of triangle~$(v,x,y)$ guarding all faces inside it and also containing at least one of~$v$ and~$y$, so triangle~$(v,y,z)$ is doubly guarded.
    Then all interior faces are guarded.

    \item [Case (c):]
    For a guarded outer vertex choose the edge connecting the other two outer vertices.
    All inner faces are incident to at least one of the three outer vertices, so they are all guarded.
  \end{description}
\end{proof}

\noindent
We can already see how Lemma~\ref{lem:stacked7} is used in our inductive step, namely in all cases where~$\lvert V^- \rvert = 4$:
After removing the vertices in~$V^-$ the triangle~$\bigtriangleup$ from~$G$ is a face in~$G'$ and this face gets guarded by any edge guard set~$\Gamma'$ for~$G'$.
Using Lemma~\ref{lem:stacked7} we know that one additional edge is always enough to extend~$\Gamma'$ to an edge guard set~$\Gamma$ for~$G$.
However, for~$\lvert V^- \rvert \geq 5$ the situation gets more complex; just removing~$V^-$ and applying the induction hypothesis might lead to an edge guard set~$\Gamma'$ for~$G'$ that cannot be extended to an edge guard set~$\Gamma$ for~$G$ with~$\ell \leq 2k/7$ additional edges (remember that~$k = \lvert G \rvert - \lvert G' \rvert)$.
See Figure~\ref{fig:stackedNeedsForcing} as an example.
To solve this and similar cases we now describe two ways how to extend~$G'$ with some new vertices and edges, such that there is always at least some edge guard set~$\Gamma'$ for~$G'$ of size~$\lfloor 2\lvert G' \rvert / 7\rfloor$ that can be augmented into an edge guard set~$\Gamma$ for~$G$ of size~$\lfloor 2\lvert G \rvert / 7 \rfloor$.

\begin{figure}
  \centering
  \includegraphics{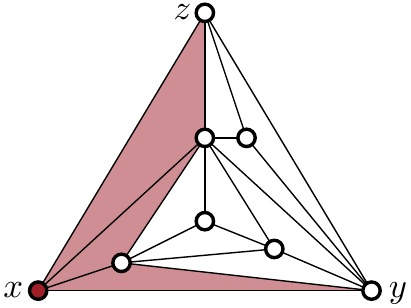}
  \caption{
    Here~$\bigtriangleup = (x,y,z)$ and~$G'$ was created by removing all interior vertices~$V^-$.
    The induction hypothesis then provided an edge guard set~$\Gamma'$ for~$G'$ that guards face~$(x,y,z)$ of~$G'$ through~$x \in V(\Gamma')$.
    After reinserting~$V^-$, the faces shaded in red are already guarded.
    But none of the other edges is strong enough to guard the remaining faces.
  }
  \label{fig:stackedNeedsForcing}
\end{figure}

\begin{lemma}
  \label{lem:stackedWeakForcingLemma}
  Let~$f = (x,y,z)$ be a face of a stacked triangulation.
  By adding two new vertices into~$f$ we can obtain a stacked triangulation~$G$ such that for each edge guard set~$\Gamma$ there is an edge guard set~$\Gamma'$ of equal size with~$\{x,y\} \subseteq V(\Gamma')$.
\end{lemma}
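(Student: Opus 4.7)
The plan is to let the gadget consist of two nested stackings inside $f=(x,y,z)$: first insert $v$ into $f$ (adjacent to $x,y,z$), then insert $w$ into the newly created triangle $(v,x,y)$ (adjacent to $v,x,y$). The resulting stacked triangulation $G$ has exactly the five new interior faces $(v,y,z)$, $(v,z,x)$, $(w,v,x)$, $(w,v,y)$, $(w,x,y)$, and both $v$ and $w$ are incident only to edges of this gadget, so no face outside of $f$ has $v$ or $w$ on its boundary.

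Given any edge guard set $\Gamma$ of $G$, I want to produce a $\Gamma'$ of the same size with $\{x,y\}\subseteq V(\Gamma')$. The key observation is that any local swap that removes an edge whose both endpoints are in $\{v,w,x,y,z\}$ and adds another such edge leaves faces outside of $f$ untouched, provided the set of ``outer'' gadget boundary vertices $\{x,y,z\}\cap V(\Gamma')$ does not shrink relative to $\{x,y,z\}\cap V(\Gamma)$. I will always arrange this by adding an edge that contains $y$ (one of $xy$, $yz$, or $wy$), so $x$ and $z$ remain undisturbed.

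If $\{x,y\}\subseteq V(\Gamma)$ I simply take $\Gamma'=\Gamma$. By the symmetry of the gadget in $x$ and $y$, the only remaining case is $y\notin V(\Gamma)$. Then each of the three $y$-faces $(v,y,z)$, $(w,v,y)$, $(w,x,y)$ must be guarded through one of its other boundary vertices, and Observation~\ref{obs:stacked3Wheel} applied to the degree-$3$ vertex $w$ forces the guarding edge at $w$ (if $w\in V(\Gamma)$) to lie in $\{wv,wx,wy\}$. This restricts $V(\Gamma)\cap\{v,w,x,z\}$, together with the identity of the guarding edge at $v$ or $w$, to a short list of sub-cases: (i) $v\in V(\Gamma)$ with guarding edge one of $vx$, $vz$, or $vw$, and (ii) $v\notin V(\Gamma)$, in which case the face constraints together with the Observation force $z,w\in V(\Gamma)$ and the guarding edge at $w$ to be exactly $wx$. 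In each sub-case I remove that specific edge and add the unique ``$y$-neighbor'' that makes the removed edge's guarded faces still covered, namely $xy$ in place of $vx$ or $wx$, $yz$ in place of $vz$, and $wy$ in place of $vw$. The newly added edge is never already in $\Gamma$ since it contains $y\notin V(\Gamma)$.

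The main obstacle will be the verification: after each swap one must check that every gadget face remains guarded and that no outside face loses its guard. Once $y\in V(\Gamma')$ the three $y$-faces are immediately covered, the remaining two gadget faces are handled by $x$ (which stays in $V(\Gamma')$), and the outside is automatically fine because the swap is internal to the gadget and $\{x,y,z\}\cap V(\Gamma')\supseteq\{x,y,z\}\cap V(\Gamma)$. The slightly delicate point is ruling out sub-cases in which no gadget-internal edge is available to swap out; this is where Observation~\ref{obs:stacked3Wheel} together with the face constraints is essential, as it forces, in the otherwise problematic ``$v\notin V(\Gamma)$'' branch, a concrete candidate edge ($wx$, and analogously $wv$ if additionally $x\notin V(\Gamma)$) to be present in $\Gamma$.
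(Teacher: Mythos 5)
Your gadget is exactly the paper's (its $a,b$ are your $v,w$), and the local-swap philosophy is also the same, but your case analysis has a genuine gap: you implicitly assume that after reducing to ``$y\notin V(\Gamma)$'' the other outer vertex $x$ is still in $V(\Gamma)$, and your swaps are built so that they only ever \emph{add} $y$ (via $yz$ or $wy$) unless $x$ was already present (the $xy$ swap is only invoked in place of $vx$ or $wx$, both of which presuppose $x\in V(\Gamma)$). However, $x\notin V(\Gamma)$ and $y\notin V(\Gamma)$ can happen simultaneously: take the $5$-vertex host where $(x,y,z)$ is the outer face and set $\Gamma=\{vw,vz\}$. This is a valid edge guard set (the edge $vw$ alone covers all five gadget faces, and $vz$ covers the outer face), it falls into your sub-case (i) with guarding edge $vz$ or $vw$, and your prescribed replacements yield $\Gamma'=\{vw,yz\}$ or $\Gamma'=\{wy,vz\}$ (or $\{wy,yz\}$ if you do both), none of which contains $x$ in $V(\Gamma')$ --- so the conclusion $\{x,y\}\subseteq V(\Gamma')$ fails. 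Your closing claim that ``the remaining two gadget faces are handled by $x$, which stays in $V(\Gamma')$'' is exactly where this breaks down; moreover the swap $vw\to wy$ by itself also loses coverage of the face $(v,z,x)$ when $v$ was its only guard.

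The paper's proof avoids this by splitting on whether $w$ ($=b$) is in $V(\Gamma)$. The decisive observation is that \emph{every} face guarded by \emph{any} edge incident to $w$ (whose neighbours are only $v,x,y$) is also guarded by the edge $xy$, so such an edge can always be exchanged for $xy$, which inserts both $x$ and $y$ at once. In the problematic ``both $x$ and $y$ missing'' situation, the face $(w,x,y)$ forces $w\in V(\Gamma)$, so this case is always caught by that exchange. Only when $w\notin V(\Gamma)$ does one argue (as you do in your sub-case (ii), essentially correctly) that exactly one of $x,y$ is present, that the face $(v,w,y)$ (resp.\ $(v,w,x)$) forces an edge $vx$ or $vz$ into $\Gamma$, and that this edge can be exchanged for $xy$ or $yz$ using $N(v)\subseteq N(y)$. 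To repair your proof you would need to add the ``$w\in V(\Gamma)$, replace its edge by $xy$'' exchange as a first case; as written, the argument does not establish the lemma.
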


\begin{proof}
  \begin{figure}
    \centering
    \includegraphics{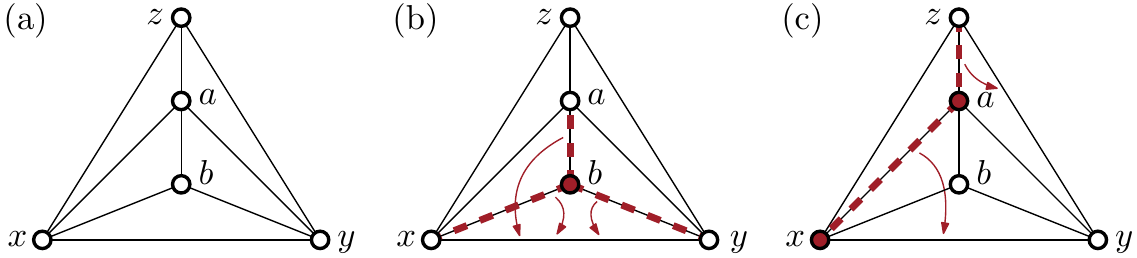}
    \caption{
      (a)~Add~$a,b$ as shown to get~$x,y \subseteq V(\Gamma)$.
      (b)~The thick dashed edges are all possible edge guards containing~$b$ as an endpoint.
      The little arrows indicate that each edge can be exchanged with edge~$xy$.
      (c)~The thick dashed edges are all possible edge guards containing~$a$ as an endpoint if~$b,y \not\in V(\Gamma)$.
      Again the little arrows indicate with which edge these edge guards can be exchanged.
    }
    \label{fig:stackedWeakForcingLemma}
  \end{figure}
  Add vertex~$a$ with edges~$ax, ay, az$ and then vertex~$b$ with edges~$ab, bx, by$ to obtain~$G$ as shown in Figure~\ref{fig:stackedWeakForcingLemma}a.
  Now let~$\Gamma$ be an edge guard set for~$G$ with $\lvert \{x,y\} \cap V(\Gamma) \rvert \leq 1$.
  If~$b \in V(\Gamma)$ as part of an edge~$bv$, we can set~$\Gamma' := (\Gamma \setminus \{bv\}) \cup \{xy\}$, see Figure~\ref{fig:stackedWeakForcingLemma}b.
  This is possible, because no matter what vertex~$v$ is, edge~$xy$ guards a superset of the faces that~$bv$ guards.
  If otherwise~$b \not\in V(\Gamma)$, we assume without loss of generality that~$x \in V(\Gamma)$ so that face~$(x,y,b)$ is guarded.
  Face~$(a,b,y)$ can then only be guarded by edge~$av$ where~$v \in \{x,z\}$.
  Since~$N(a) \subseteq N(y)$ we can set~$\Gamma' := (\Gamma \setminus \{av\}) \cup \{vy\}$, see Figure~\ref{fig:stackedWeakForcingLemma}c.
  In both cases~$\{x,y\} \subseteq \Gamma'$ and~$\lvert \Gamma \rvert = \lvert \Gamma' \rvert$.
\end{proof}

\begin{lemma}
  \label{lem:stackedStrongForcingLemma}
  Let~$(x,y,z)$ be a face of a stacked triangulation.
  By adding three new vertices~$a,b,c$ into~$(x,y,z)$ we can obtain a stacked triangulation~$G$ such that for each edge guard set~$\Gamma$ of~$G$ there is an edge guard set~$\Gamma'$ of equal size with~$x \in V(\Gamma')$ and an edge~$vw \in \Gamma'$ with~$v \in \{x,y,z\}$ and~$w \in \{a,b,c\}$.
\end{lemma}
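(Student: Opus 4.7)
The plan is to reduce to Lemma~\ref{lem:stackedWeakForcingLemma} by a two-step stacking inside $(x,y,z)$: first insert a vertex $a$ with $a\sim x,y,z$, and then apply the weak forcing construction to the new sub-face $(a,x,y)$, choosing the forced pair to be $\{a,x\}$. Concretely this amounts to adding $b$ with $b\sim a,x,y$ and then $c$ with $c\sim b,a,x$. The resulting graph $G$ is a stacked triangulation, and its three new vertices satisfy $N(a)=\{x,y,z,b,c\}$, $N(b)=\{a,x,y,c\}$, and $N(c)=\{a,b,x\}$.

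Given an edge guard set $\Gamma$ of $G$, Lemma~\ref{lem:stackedWeakForcingLemma} applied to the face $(a,x,y)$ supplies an edge guard set $\Gamma'$ of the same size with $\{a,x\}\subseteq V(\Gamma')$; in particular $x\in V(\Gamma')$, which already satisfies the first requirement. For the crossing-edge requirement I exploit that $a\in V(\Gamma')$: there is some edge $ae\in\Gamma'$ with $e\in\{x,y,z,b,c\}$. If $e\in\{x,y,z\}$, then $ae$ itself is the desired crossing edge and I set $\Gamma'':=\Gamma'$. Otherwise $e\in\{b,c\}$, and I swap $ae$ for $ax$, setting $\Gamma'':=(\Gamma'\setminus\{ae\})\cup\{ax\}$; then $|\Gamma''|=|\Gamma|$, $x\in V(\Gamma'')$, and $ax$ is the required crossing edge from $\{x,y,z\}$ to $\{a,b,c\}$.

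The main obstacle is verifying that this last swap preserves the edge guard property. The decisive observation is that each of the seven new faces introduced inside $(x,y,z)$, namely $(a,y,z)$, $(a,x,z)$, $(b,a,y)$, $(b,x,y)$, $(c,b,a)$, $(c,b,x)$, $(c,a,x)$, contains $a$ or $x$ on its boundary; hence the single edge $ax$ guards all of them, which in particular dominates the coverage that $ae$ provides inside the construction. Since $a,b,c$ have no neighbors outside $(x,y,z)$, the removed edge $ae$ guards no external face, so external coverage is unaffected. Finally, $x$ retains all its edge-guard duties outside the construction through its other edges in $\Gamma'$ (which are not touched by the swap), so $\Gamma''$ is a valid edge guard set with the required properties.
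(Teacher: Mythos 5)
Your proof is correct, and your construction ($b\sim a,x,y$, then $c\sim a,b,x$) is literally the same gadget the paper uses. What differs is the verification. The paper argues directly: it case-splits on whether $\Gamma$ contains an edge with both endpoints in $\{a,b,c\}$ (swap it for $ax$) and otherwise deduces step by step that $b$ and $a$ must be guarded, that the edge at $b$ must be $by$, and that $by$ can be exchanged for $xy$. You instead observe that the addition of $b$ and $c$ is exactly the weak-forcing gadget of Lemma~\ref{lem:stackedWeakForcingLemma} applied to the sub-face $(a,x,y)$ with forced pair $\{a,x\}$, import $\{a,x\}\subseteq V(\Gamma')$ wholesale, and then need only one further exchange (replace an edge $ae$ with $e\in\{b,c\}$ by $ax$, justified because $a,b,c$ are interior and every one of the seven new faces meets $a$ or $x$). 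This is more modular and arguably cleaner: it reuses the already-proved lemma instead of redoing a case analysis, and the single remaining exchange is easy to check. The only point worth tightening is the choice of $ae$: phrase it as ``if \emph{some} edge of $\Gamma'$ at $a$ ends in $\{x,y,z\}$, take that one; otherwise \emph{every} edge of $\Gamma'$ at $a$ ends in $\{b,c\}$, so $ax\notin\Gamma'$ and the swap preserves cardinality.'' As written, one could pick an $ae$ with $e\in\{b,c\}$ while $ax$ already lies in $\Gamma'$, and the swap would then shrink the set; the fix is immediate, so this is a presentational issue rather than a gap.
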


\begin{proof}
  \begin{figure}
    \centering
    \includegraphics{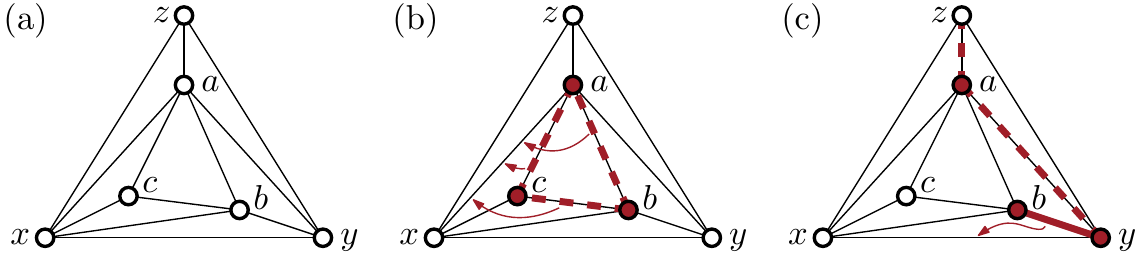}
    \caption{
      (a)~Add~$a,b,c$ like this into face~$(x,y,z)$.
      (b)~An edge guard with both endpoints in~$\{a,b,c\}$ can be exchanged by an edge guard~$ax$.
      (c)~If~$by$ and vertex~$a$ are guarded, exchange~$by$ by~$xy$.
    }
    \label{fig:stackedStrongForcingLemma}
  \end{figure}
  Add vertex~$a$ with edges~$ax, ay, az$, then vertex~$b$ with edges~$ab, bx, by$ and then vertex~$c$ with edges~$ac, bc, cx$ to obtain~$G$, see Figure~\ref{fig:stackedStrongForcingLemma}a.
  Now let~$\Gamma$ be an edge guard set for~$G$ that does not fulfill the requirements.
  If there is an edge~$uw$ with~$\{u,w\} \subseteq \{a,b,c\}$ we can set~$\Gamma' := (\Gamma \setminus \{uw\}) \cup \{ax\}$.
  This is because edge~$ax$ guards a superset of the faces that~$bv$ guards, see Figure~\ref{fig:stackedStrongForcingLemma}b.

  If such an edge does not exist, there must be some other edge~$vw \in \Gamma$ guarding face~$(a,b,c)$ with~$w \in \{a,b,c\}$ and~$v \in \{y,z\}$.
  Further, as~$x \not\in V(\Gamma)$ we also have~$c \not\in V(\Gamma)$, otherwise the previous case would apply.
  This leaves vertex~$b$ to be the only possible vertex to guard face~$(b,c,x)$ and vertex~$a$ to be the only possible vertex to guard face~$(a,x,c)$.
  The edge guard starting at~$b$ can only have~$y$ as its other endpoint, see Figure~\ref{fig:stackedStrongForcingLemma}c.
  In any case we set~$\Gamma' := (\Gamma \setminus \{by\}) \cup \{xy\}$.
\end{proof}

\noindent
With Lemma~\ref{lem:stackedWeakForcingLemma} and Lemma~\ref{lem:stackedStrongForcingLemma} at hand we can now consider the cases where~$\lvert V^- \rvert \geq 5$, i.e. stacked triangulations on eight or more vertices.

\begin{figure}
  \centering
  \includegraphics{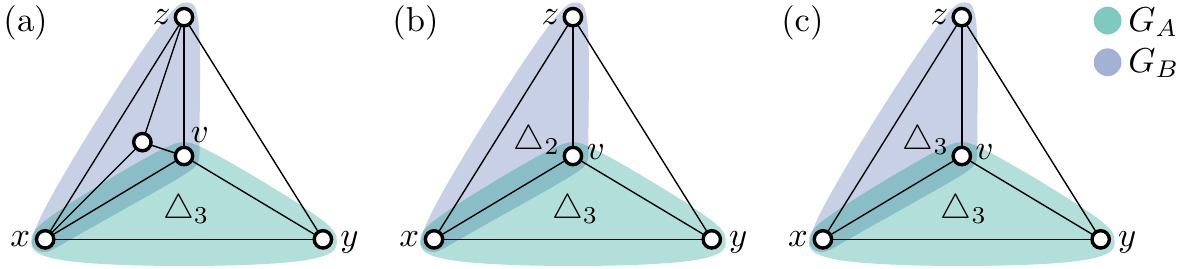}
  \caption{
    Configurations as described in
    (a)~Lemma~\ref{lem:stacked2Vertex5},
    (b)~Lemma~\ref{lem:stacked2Vertex6} and
    (c)~Lemma~\ref{lem:stacked2Vertex7}.
    Here~$\bigtriangleup_k$ is a placeholder for~$k$ additional vertices inside the surrounding triangle (such that the graph is a stacked triangulation).
    The subgraphs induced by the vertices highlighted in green (blue) induce the stacked triangulation~$G_A$ ($G_B$).
  }
  \label{fig:stacked2VertexOverview}
\end{figure}

\begin{lemma}
  \label{lem:stacked2Vertex5}
  Let~$G$ be an $8$-vertex stacked triangulation with outer face~$(x,y,z)$, such that the following configuration applies (see Figure~\ref{fig:stacked2VertexOverview}a):
  \begin{itemize}
    \item Vertex~$v$ is the only vertex adjacent to all~$x,y,z$.
    \item $(v,x,y)$ and its interior vertices induce a~$6$-vertex stacked triangulation~$G_A$.
    \item $(v,z,x)$ and its interior vertex induce a~$4$-vertex stacked triangulation~$G_B$.
  \end{itemize}
  Then any edge guard set~$\Gamma'$ for the subgraph~$G'$ induced by~$\{v,x,y,z\}$ can be extended by one edge~$e$ to an edge guard set~$\Gamma$ for~$G$.
\end{lemma}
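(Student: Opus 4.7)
The plan is a case analysis on minimal edge guard sets $\Gamma' = \{e'\}$ of $G' = K_4$, noting that every single edge of $K_4$ already guards all four of its faces because each face contains three of the four vertices while an edge has only two endpoints. Let $w$ denote the unique interior vertex of $G_B$ (adjacent to $v, x, z$), and let $a$ denote the unique interior vertex of $G_A$ adjacent to all three outer vertices $v, x, y$ of $G_A$ (this is the first vertex inserted in the stacking construction of $G_A$). The outer face $(x, y, z)$ and the face $(v, y, z)$ of $G$ are shared with $G'$ and hence already guarded by $\Gamma'$; what remains to cover are the seven inner faces of $G_A$ and the three inner faces $(v, x, w), (v, z, w), (z, x, w)$ of $G_B$. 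I record that each inner face of $G_B$ contains $w$ together with exactly two of $\{v, x, z\}$, so $G_B$ is fully guarded if and only if $V(\Gamma \cup \{e\})$ contains $w$ or at least two vertices of $\{v, x, z\}$.

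Then I handle the six cases for $e'$. If $e' \in \{vx, vz, xz\}$, then $V(\Gamma')$ already contains two vertices of $\{v, x, z\}$ (so $G_B$ is fully guarded) and at least one outer vertex of $G_A$; Lemma~\ref{lem:stacked6} then supplies an edge $e \in E(G_A)$ that guards the remaining inner faces of $G_A$. If $e' = vy$, I take $e = ax$: the endpoint $x$ guards the missing $G_B$ face $(z, x, w)$, and every inner face of $G_A$ unguarded by $\{v, y\}$ has its three vertices in $\{x, a, b, c\}$ (where $b, c$ are the remaining interior vertices of $G_A$) and therefore contains $x$ or $a$, since a triangular face cannot fit inside the two-element set $\{b, c\}$. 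The case $e' = xy$ is symmetric with $e = va$. Finally, if $e' = yz$, I apply Lemma~\ref{lem:stacked6} with $y$ as the vertex guard at an outer vertex of $G_A$; this yields an edge $e \in E(G_A)$ whose second endpoint lies in $\{v, x\}$, so together with $z \in V(\Gamma')$ the set $V(\Gamma \cup \{e\})$ contains two vertices of $\{v, x, z\}$, fully guarding $G_B$, while $e$ itself guards all inner faces of $G_A$ not containing $y$.

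The main obstacle is the verification of the three subcases $e' \in \{vy, xy, yz\}$, in each of which $\Gamma'$ leaves one inner face of $G_B$ unguarded and potentially several inner faces of $G_A$ unguarded, so the single edge $e$ must serve both parts at once. The crucial structural fact is that the hub vertex $a$ of $G_A$ is adjacent to both $v$ and $x$; this enables the uniform choices $e = ax$ and $e = va$ in the subcases $\{vy\}$ and $\{xy\}$, where the hub endpoint $a$ covers the inner-only faces of $G_A$ and the other endpoint ($x$ or $v$, respectively) covers the missing face of $G_B$. In the subcase $\{yz\}$ the conclusion of Lemma~\ref{lem:stacked6} conveniently forces the second endpoint of $e$ to lie in $\{v, x\}$, which is precisely what is needed to cover $G_B$.
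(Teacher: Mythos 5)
Your proof is correct and follows essentially the same route as the paper's: both arguments reduce to which vertices of $\{v,x,y,z\}$ lie in $V(\Gamma')$, guard $G_B$ by making the triangle $(v,z,x)$ doubly guarded, and invoke Lemma~\ref{lem:stacked6} to handle $G_A$. The only substantive difference is that in the subcases $e' \in \{vy, xy\}$ you choose the explicit edge $ax$ (resp.\ $va$) through the hub $a$ of $G_A$ and argue directly that the unguarded faces must meet $\{x,a\}$ (resp.\ $\{v,a\}$), which neatly sidesteps the paper's reliance on choosing the Lemma~\ref{lem:stacked6} edge so that its outer endpoint avoids $V(\Gamma')$; you should just add one line reducing an arbitrary $\Gamma'$ to a single edge $e' \in \Gamma'$ (every edge of $K_4$ already guards all four of its faces, and extra guarded vertices only help).
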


\begin{proof}
  By Observation~\ref{obs:stacked3Wheel} we have~$\lvert \{v,x,y,z\} \cap V(\Gamma') \rvert \geq 2$.
  Face~$(v,y,z)$ is then already guarded.
  If further~$\lvert \{v,x,z\} \cap V(\Gamma') \rvert \geq 2$, then triangle~$(v,z,x)$ is doubly guarded, so all faces of~$G_B$ are guarded.
  In this case set~$e$ to be the unique edge guarding~$G_A$, which exists by Lemma~\ref{lem:stacked6}.

  If otherwise~$\lvert \{v,x,z\} \cap V(\Gamma') \rvert = 1$, we have~$y \in V(\Gamma')$.
  By Lemma~\ref{lem:stacked6} an edge~$ab$ exists that guards the remaining faces of~$G_A$ with~$a \in \{x,v\}$ but also $a \not\in V(\Gamma')$.
  Then~$G_B$ is doubly guarded so all of its faces are guarded.
\end{proof}

\begin{lemma}
  \label{lem:stacked2Vertex6}
  Let~$G$ be a~$9$-vertex stacked triangulation with outer face~$(x,y,z)$, such that the following configuration applies (see Figure~\ref{fig:stacked2VertexOverview}b):
  \begin{itemize}
    \item Vertex~$v$ is the only vertex adjacent to all~$x,y,z$.
    \item $(v,x,y)$ and its interior vertices induce a~$6$-vertex stacked triangulation~$G_A$.
    \item $(v,z,x)$ and its interior vertices induce a~$5$-vertex stacked triangulation~$G_B$.
  \end{itemize}
  Then we can create a~$5$-vertex stacked triangulation~$G'$, such that any edge guard set~$\Gamma'$ for~$G'$ can be augmented into an edge guard set~$\Gamma$ for~$G$ with~$\lvert \Gamma \rvert = \lvert \Gamma' \rvert + 1$.
\end{lemma}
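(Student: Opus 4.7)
The natural choice I would make is to let $G'$ be the induced subgraph $G[\{x,y,z,v,w\}]$ of $G$, where $w$ is the unique apex of $G_B$, i.e.\ the interior vertex of $G_B$ adjacent to all three of $v,z,x$. This $G'$ is a stacked triangulation on $5$ vertices because $w$ is stacked into the face $(v,z,x)$ of the $K_4$ on $\{x,y,z,v\}$. Let $w'$ denote the remaining interior vertex of $G_B$, and let $f^\star \in \{(v,z,w),(w,z,x),(w,x,v)\}$ be the face of $G'$ that contains $w'$ inside $G$.

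Given any edge guard set $\Gamma'$ of $G'$, I would construct $\Gamma := \Gamma' \cup \{e\}$ with a single additional edge $e$ and verify that every face of $G$ is guarded. The faces of $G$ split into three groups: (i) faces that already appear in $G'$, which are covered by $\Gamma'$ for free; (ii) the seven interior faces of $G_A$ sitting inside the $G'$-face $(v,x,y)$; and (iii) the three sub-faces of $G$ obtained by subdividing $f^\star$ with $w'$. Because $\Gamma'$ must guard the $G'$-face $(v,x,y)$, we have $V(\Gamma') \cap \{v,x,y\} \neq \emptyset$, and Lemma~\ref{lem:stacked6} applied to $G_A$ with this vertex guard then supplies a single edge $e$ inside $G_A$ covering all of its interior faces. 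When $f^\star$ happens to be doubly guarded by $\Gamma'$, Observation~\ref{obs:doublyTriply} covers all three sub-faces containing $w'$.

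The main obstacle is the case where $f^\star$ is only singly guarded, which can leave one sub-face incident to $w'$ uncovered. Here I would invoke a swap argument in the spirit of Lemmas~\ref{lem:stackedWeakForcingLemma} and \ref{lem:stackedStrongForcingLemma}: since $w$ has neighborhood $\{v,z,x\}$ in $G'$, any edge of $\Gamma'$ incident to $w$ can be exchanged for a suitable edge between two outer vertices of $G_B$ without losing coverage of any $G'$-face, producing an equal-size $\Gamma'$ in which $f^\star$ becomes doubly guarded. When no such swap applies, I would use the freedom in Lemma~\ref{lem:stacked6} to choose the endpoint of $e$ inside $\{x,y\}$ so that it also lies on the boundary of the problematic sub-face of $f^\star$. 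A short case analysis over the three possible positions of $w'$ and over the handful of minimum guard sets of $G'$ then completes the argument; the step I would check most carefully is that the swap preserves coverage of every other face of $G'$, precisely as in the forcing lemmas.
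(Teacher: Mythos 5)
Your choice of $G'$ is genuinely different from the paper's: you keep $G' = G[\{x,y,z,v,w\}]$ as an induced subgraph of $G$, whereas the paper deletes all six interior vertices of $(x,y,z)$ and inserts the two-vertex gadget of Lemma~\ref{lem:stackedWeakForcingLemma}, precisely so that it can \emph{force} a prescribed pair of outer vertices into $V(\Gamma')$, the pair being chosen according to which outer edge of $G_B$ guards $G_B$ and where the unique edge guard of $G_A$ sits. That control is not a convenience but the crux, and your sketch does not recover it. Concretely, let $G_A$ be obtained by stacking $D$ into $(v,x,y)$, $E$ into $(D,x,y)$ and $F$ into $(E,y,D)$, and let $G_B$ be obtained by stacking $w$ into $(v,z,x)$ and $w'$ into $(w,v,z)$. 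Then $\Gamma' = \{xz\}$ is a valid edge guard set of your $G'$, but $\{xz\} \cup \{e\}$ is never an edge guard set of $G$: the faces $(D,y,v)$, $(F,E,y)$, $(F,y,D)$, $(F,D,E)$ of $G_A$ and the face $(v,w,w')$ of $G_B$ are all missed by $xz$, the only edges of $G$ covering the four $G_A$-faces are $Dy$, $Ey$, $Fy$, $DE$, $DF$, and none of these has an endpoint in $\{v,w,w'\}$. Neither of your repair mechanisms applies here: $\Gamma'$ contains no edge incident to $w$, and the ``freedom in Lemma~\ref{lem:stacked6}'' you invoke does not exist in this instance --- with $x$ pre-guarded, \emph{no} edge containing $v$ guards the remaining faces of this $G_A$ (the other outer endpoint guaranteed by Lemma~\ref{lem:stacked6} is forced to be $y$), and $y$ never lies on a face of $G_B$, so choosing the endpoint ``inside $\{x,y\}$'' cannot help the $G_B$ side.

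What rescues this example is replacing $xz$ by $vz$ \emph{before} extending, i.e., a forcing step sensitive simultaneously to which outer edge of $G_B$ guards $G_B$ (here only $vz$ does) and to which outer vertex carries the unique edge guard of $G_A$ (here $y$). That coordination is exactly what the paper's gadget $G'$ delivers and what an induced-subgraph $G'$ does not: you would have to prove a forcing statement for your $G'$ --- that every edge guard set can be converted into an equal-size one whose guarded outer vertices match the needs of both $G_A$ and $G_B$ at once --- and your deferred ``short case analysis'' is precisely where that difficulty lives. As written, the proposal has a genuine gap.
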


\begin{proof}
  Note that the~$5$-vertex stacked triangulation~$G_B$ can always be guarded by one of its outer edges~$vx$, $vz$ or $xz$.
  In all cases we first remove the interior vertices of~$(x,y,z)$ and add two new vertices~$a$ and~$b$ into~$(x,y,z)$ to get a stacked triangulation~$G'$.
  By placing~$a$ and~$b$ appropriately, Lemma~\ref{lem:stackedWeakForcingLemma} allows us to force one of the three sets~$\{x,y\}$, $\{x,z\}$, $\{y,z\}$ to be a subset of~$V(\Gamma')$.
  Depending on which edge from~$\{vx, vz, xz\}$ guards~$G_B$, we force a different one of the three sets.
  In the following let~$e = uw$ be the unique edge guarding~$G_A$ such that~$u \in \{v,x,y\}$ by Lemma~\ref{lem:stacked6}.

  \begin{enumerate}[label=\textbf{Case \arabic*:}, ref=\arabic*, leftmargin=*, labelindent=0em, itemindent=3em]
    \item $xz$ guards~$G_B$: \\
    Place~$a$ and~$b$ such that~$x,z \in V(\Gamma')$.
    Then all faces of~$G_B$ and face~$(v,y,z)$ are guarded.
    We set~$\Gamma := \Gamma' \cup \{e\}$ to also guard all faces of~$G_A$.

    \item $xv$ guards~$G_B$: \\
    Place~$a$ and~$b$ such that~$x,y \in V(\Gamma')$.
    Then face~$(v,y,z)$ is already guarded.
    If~$u = v$, set~$\Gamma := \Gamma' \cup \{e\}$ to guard all faces of~$G_B$ and $G_A$.
    Otherwise we can use~$e' := vw$ instead by Lemma~\ref{lem:stacked6} and set~$\Gamma := \Gamma' \cup \{e'\}$.

    \item $vz$ guards~$G_B$: \\
    If~$u = v$, we can place~$a$ and~$b$ to force~$y,z \in V(\Gamma')$.
    Then~$\Gamma := \Gamma' \cup \{e\}$ guards all faces of~$G_A$ and~$G_B$.
    Otherwise, place~$a$ and~$b$ so that~$u,z \in V(\Gamma')$ and set~$e' := vw$ by Lemma~\ref{lem:stacked6}.
    Then~$\Gamma := \Gamma' \cup \{e'\}$ fulfills the requirements.
    \qedhere
  \end{enumerate}
\end{proof}

\begin{lemma}
  \label{lem:stacked2Vertex7}
  Let~$G$ be a~$10$-vertex stacked triangulation with outer face~$(x,y,z)$, such that the following configuration applies (see Figure~\ref{fig:stacked2VertexOverview}c):
  \begin{itemize}
    \item Vertex~$v \in V$ is the unique vertex adjacent to all~$x$,~$y$ and~$z$.
    \item $(v,x,y)$ and its interior vertices induce a~$6$-vertex stacked triangulation~$G_A$.
    \item $(v,z,x)$ and its interior vertices induce a~$6$-vertex stacked triangulation~$G_B$.
  \end{itemize}
  Then we can create a~$6$-vertex stacked triangulation~$G'$, such that any edge guard set~$\Gamma'$ for~$G'$ can be augmented to an edge guard set~$\Gamma$ for~$G$ with~$\lvert\Gamma\rvert = \lvert\Gamma'\rvert + 1$.
\end{lemma}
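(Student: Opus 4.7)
My plan is to parallel the strategy of Lemma~\ref{lem:stacked2Vertex6}, adapted to the fact that both $G_A$ and $G_B$ are now $6$-vertex stacked triangulations, so by Lemma~\ref{lem:stacked6} each carries a unique edge guard that is typically an \emph{interior} edge rather than a boundary edge. Accordingly, I would construct $G'$ by first deleting the seven vertices of~$V^-$ inside~$(x,y,z)$ (namely $v$ and the three interior vertices of each of $G_A$ and $G_B$) and then refilling the face $(x,y,z)$ with three new vertices $a,b,c$ placed exactly as in Lemma~\ref{lem:stackedStrongForcingLemma}. The resulting $G'$ is a $6$-vertex stacked triangulation, as required.

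By Lemma~\ref{lem:stacked6}, $G'$ has an edge guard set of a single edge, so by the induction hypothesis we obtain $\Gamma'$ with $\lvert\Gamma'\rvert = 1$. Applying Lemma~\ref{lem:stackedStrongForcingLemma} I may replace $\Gamma'$ by an equally sized set containing a forced edge $pq$ with $p\in\{x,y,z\}$ and $q\in\{a,b,c\}$ and guaranteeing $x\in V(\Gamma')$; combined with $\lvert\Gamma'\rvert = 1$ this pins $\Gamma' = \{xq\}$. I then translate this into an edge $e_1$ of $G$ incident to $x$ (whose other endpoint I still freely choose among $x$'s neighbours in $G_A\cup G_B$) and augment with a single further edge $e_2$, aiming to verify that $\Gamma := \{e_1,e_2\}$ guards all $16$ faces of $G$.

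Next I would invoke Lemma~\ref{lem:stacked6} twice more, inside $G_A$ and inside $G_B$: since $x$ is an outer vertex of both subtriangulations and is already guarded, each admits an edge $e_A$ (respectively $e_B$) finishing it off, with one endpoint in $\{v,y\}$ (respectively $\{v,z\}$). The vertex $v$ is the key shared resource, because it lies on the otherwise orphan face $(v,y,z)$ and is the only interior vertex of $(x,y,z)$ that belongs to both $G_A$ and $G_B$. The plan is to choose $e_2$ as an interior edge through $v$ that guards one of the two subtriangulations together with the face $(v,y,z)$, and to route $e_1$ through $x$ into the other subtriangulation far enough to cover the unique face that would otherwise be missed. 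Observation~\ref{obs:doublyTriply} will be used throughout to propagate guarding information from the triangles $(v,x,y)$ and $(v,z,x)$ inwards to the first level of stacked vertices.

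The main obstacle will be the resulting case analysis: by Lemma~\ref{lem:stacked6} each of $G_A$ and $G_B$ has three non-isomorphic embeddings (Figure~\ref{fig:stacked6}), giving up to nine pairings to examine. For each pairing I would explicitly identify the face in $G_A$ and in $G_B$ that would be missed by the naive choice $\{xv\}$, and then exhibit a concrete $e_1$ through $x$ that reaches into one subtriangulation and a concrete $e_2$ through $v$ that guards the other. The freedom to force $y$ or $z$ instead of $x$, by relabelling the outer face before invoking Lemma~\ref{lem:stackedStrongForcingLemma}, provides enough symmetry to merge several subcases. Once every pairing is dispatched, $\Gamma = \{e_1,e_2\}$ satisfies $\lvert\Gamma\rvert = \lvert\Gamma'\rvert + 1$ and the proof is complete.
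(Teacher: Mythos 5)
Your construction of $G'$ (delete the seven vertices interior to $(x,y,z)$, refill with $a,b,c$ as in Lemma~\ref{lem:stackedStrongForcingLemma}) and your use of the forcing lemma to pin down a forced outer vertex and a degenerate edge $uw$ that can be re-aimed inside $G$ are exactly what the paper does. The gap is in your covering strategy for the two subtriangulations. You commit to the invariant ``$e_1$ through $x$ into one of $G_A,G_B$, and $e_2$ through $v$ finishing the other together with $(v,y,z)$,'' but this fails in a concrete case: both $G_A$ and $G_B$ can be embedded so that their unique edge guards (Lemma~\ref{lem:stacked6}) are $yw_A$ and $zw_B$ with $w_A,w_B$ interior hubs, and in that situation \emph{no} edge incident to $v$ finishes $G_A$ or $G_B$ (e.g.\ with hub $p$, then $q$ stacked on $(p,x,y)$ and $r$ on $(q,y,p)$, every finishing edge for a guard at $x$ must meet both $(r,p,q)$ and $(r,q,y)$, which no edge at $v$ does), and an edge at $x$ does not finish the other side either. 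Two edges anchored at $x$ and $v$ simply cannot cover both deep faces. The paper's proof avoids this by case-splitting on \emph{where} the outer endpoints $u_A\in\{v,x,y\}$ and $u_B\in\{v,x,z\}$ of the two unique edge guards sit, choosing \emph{which} vertex of $\{x,y,z\}$ to force accordingly (in the case $u_A=y$ it forces $y$, not $x$), and then arranging $\{u_A,w_A,u_B,w_B\}\subseteq V(\Gamma)$ so that both subtriangulations are guarded outright, with a final one-endpoint adjustment for the face $(v,y,z)$. Your remark about ``relabelling the outer face'' gestures at this, but $x$, $y$, $z$ are not interchangeable here ($x$ lies on both $G_A$ and $G_B$, $y$ only on $G_A$), so relabelling alone does not rescue the fixed $x$/$v$ anchoring; the choice of forced vertex must be driven by $u_A,u_B$.

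Two smaller points. First, you reduce to $\lvert\Gamma'\rvert=1$ via the induction hypothesis, but the lemma is stated for an \emph{arbitrary} edge guard set $\Gamma'$ of $G'$ (and is proved before the induction); the paper therefore has to handle the forced vertex and the forced edge $uw$ being realized by different edges of $\Gamma'$. This is harmless for the application but your proof as written establishes only the weaker statement. Second, your proposed case analysis over the nine pairings of embeddings of $G_A$ and $G_B$ is workable in principle but heavier than needed; parameterizing by $(u_A,u_B)$ collapses it to four cases.
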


\begin{proof}
  We construct~$G'$ by removing all interior vertices of~$(x,y,z)$ and by adding three new vertices~$a,b,c$ such that~$G'$ is a stacked triangulation.
  Lemma~\ref{lem:stackedStrongForcingLemma} allows us to connect~$a,b,c$ in such a way that there is an edge guard set~$\Gamma'$ with a predefined vertex from~$\{x,y,z\}$ in~$V(\Gamma')$ and an edge~$e = uw \in \Gamma'$ with~$u \in \{x,y,z\}$ and~$w \in \{a,b,c\}$.
  When applying~$\Gamma'$ to~$G$, edge guard~$uw \in \Gamma'$ degenerates into a vertex guard at~$u$ and we can extend it arbitrarily to an edge guard~$uw' \in E$.

  Let us name two special edges of~$G$:
  Edge~$u_Aw_A$ is the unique edge guard of~$G_A$ and~$u_Bw_B$ is the unique edge guard of~$G_B$ by Lemma~\ref{lem:stacked6}.
  Let the naming be so that~$u_A$ and~$u_B$ are incident to the outer face of~$G_A$ and~$G_B$, respectively.
  Depending on which vertices of~$G$ the vertices~$u_A$ and~$u_B$ correspond to, we connect~$a,b,c$  differently to~$x,y,z$ to force a different vertex from~$\{x,y,z\}$ to be in~$\Gamma'$.
  In any case we then show which edge~$uw'$ to use instead of edge~$uw$ and which other edge to add to get edge guard set~$\Gamma$ for~$G$.

  \begin{enumerate}[label=\textbf{Case \arabic*:}, ref=\arabic*, leftmargin=*, labelindent=0em, itemindent=3em]
    \item $x = u_A = u_B$: \\
    Force vertex~$x \in V(\Gamma')$.
    If~$u = x$, choose~$uw' = xw_A$.
    If otherwise~$u \neq x$, we assume without loss of generality that~$u = y$ (the other case~$u = z$ works symmetrically) and choose~$uw' = yw_A$.
    In both cases set $\Gamma := (\Gamma' \setminus \{uw\}) \cup \{uw', u_Bw_B\}$.

    \item\label{itm:stacked2Vertex7Case2} $x = u_A \neq u_B$: ($x = u_B \neq u_A$ can be handled symmetrically) \\
    Force vertex~$x \in V(\Gamma')$.
    Now if~$u = x$, use choose~$uw' = xw_A$ and if~$u = y$, choose~$uw' = yw_A$.
    In both cases set $\Gamma := (\Gamma' \setminus \{uw\}) \cup \{uw', u_Bw_B\}$.
    If otherwise~$u = z$, choose~$uw' = zw_B$ and set $\Gamma := (\Gamma' \setminus \{uw\}) \cup \{uw', vw_A\}$.

    \item $v = u_A = u_B$: \\
    Force vertex~$x \in V(\Gamma')$.
    Without loss of generality we get~$u = x$ or~$u = y$.
    In both cases use~$uw' = uw_A$ and set $\Gamma := (\Gamma' \setminus \{uw\}) \cup \{uw', vw_B\}$.

    \item $y = u_A$: ($z = u_B$ can be handled symmetrically) \\
    Force vertex~$y \in V(\Gamma')$.
    If we have~$u = y$ or~$u = x$, use~$uw' = uw_A$ and set $\Gamma := (\Gamma' \setminus \{uw\}) \cup \{uw', u_Bw_B\}$.
    If otherwise~$u = z$, use~$uw' = zw_B$ and set $\Gamma := (\Gamma' \setminus \{uw\}) \cup \{uw', vw_A\}$ additional edge~$w_Av$.
    This always works, because~$u_B \in \{v,z\}$ since otherwise Case~\ref{itm:stacked2Vertex7Case2} applies.
  \end{enumerate}
  In all cases~$\{u_A,w_A,u_B,w_B\} \subseteq V(\Gamma)$, so all faces of~$G_A$ and~$G_B$ are guarded.
  If face~$(v,y,z)$ is also guarded by~$\Gamma$, we are done, so assume it is not.
  Then it must be~$x = u_A = u_B$, because both~$u_A$ and~$u_B$ are outer vertices of~$G_A$ and~$G_B$, respectively.
  We can change the edge containing~$u_B$ to end at~$v$ instead.
\end{proof}

\noindent
Finally we are set up to prove Theorem~\ref{thm:stackedUpperBound} stating that~$\lfloor 2n/7 \rfloor$ edge guards are always sufficient for any~$n$-vertex stacked triangulation~$G$.

\begin{proof}[Proof of Theorem~\ref{thm:stackedUpperBound}]
  As described above, the proof is by induction on the number~$n$ of vertices.
  We find a smaller graph~$G'$ for which the induction hypothesis provides an edge guard set~$\Gamma'$ that we augment into an edge guard set~$\Gamma$ for~$G$.
  By guaranteeing that~$(\lvert\Gamma\rvert - \lvert\Gamma'\rvert) / (\lvert G \rvert - \lvert G' \rvert) \leq 2/7$ we hereby obtain an edge guard set for~$G$ of size at most~$\lfloor 2n/7 \rfloor$.

  For base case we note that if~$n \leq 6$, we need a single edge guard by Lemma~\ref{lem:stacked6}.
  So from now on assume~$n \geq 7$.
  Let~$\bigtriangleup = (x,y,z)$ be a triangle such that there are at least four vertices~$V^-$ inside~$\bigtriangleup$ but among all candidates~$\lvert V^- \rvert$ is minimal.
  Further let~$v \in V^-$ be the unique vertex adjacent to all~$x,y,z$.
  We consider the following cases in the order they are given:
  If a case applies, then all others before must not apply.
  \begin{enumerate}[label=\textbf{Case \arabic*:}, ref=\arabic*, leftmargin=*, labelindent=0em, itemindent=3em]
    \item\label{itm:stacked4VertexSubtree}
    $\lvert V^- \rvert = 4$: \\
    Set~$G' := G[V \setminus V^-]$ and use the induction hypothesis to get an arbitrary edge guard set~$\Gamma'$ for~$G'$.
    Triangle~$\bigtriangleup$ is a face in~$G'$ and as such guarded by~$\Gamma'$ through at least one of its boundary vertices.
    Together with the vertices in~$V^-$ it forms a~$7$-vertex stacked triangulation with at least one guarded outer vertex, so by Lemma~\ref{lem:stacked7} we can extend~$\Gamma'$ by one additional edge to an edge guard set~$\Gamma$ for~$G$.
    We get~$k = \lvert G \rvert - \lvert G' \rvert = 4$ and~$\ell = 1$, so~$\ell / k = 1/4 \leq 2/7$.

    \item\label{itm:stackedMaxHeight3}
    $\lvert V^- \rvert \geq 5 \land \mathrm{height}(G[\bigtriangleup \cup V^-]) \leq 3$: \\
    Construct~$G'$ by removing all vertices from~$V^-$ except for~$v$ and let~$\Gamma'$ be an edge guard set for~$G'$ given by the induction hypothesis.
    By Observation~\ref{obs:stacked3Wheel} we have~$\lvert \{v,x,y,z\} \cap V(\Gamma') \rvert \geq 2$ and we set~$\Gamma$ to be~$\Gamma'$ plus one additional edge, so that~$\{v,x,y,z\} \subseteq V(\Gamma)$.
    This is always possible, because~$v,x,y,z$ induce a~$4$-clique in~$G$.
    Because~$\mathrm{height}(G[\bigtriangleup \cup V^-]) \leq 3$, each face of~$G$ inside~$\bigtriangleup$ is incident to at least one vertex in~$\{v,x,y,z\}$, so all faces are guarded.
    We get~$k = \lvert G \rvert - \lvert G' \rvert \geq 4$ and~$\ell = 1$, so~$\ell / k \leq 1/4 \leq 2/7$.
  \end{enumerate}

  \noindent
  At this stage we finished all cases where~$\mathrm{height}(G[\bigtriangleup \cup V^-]) \leq 3$.
  The following cases all have~$\mathrm{height}(G[\bigtriangleup \cup V^-]) = 4$.
  (Note \emph{equality} instead of \emph{greater than or equal}.
  This is justified, because if one of the three triangles~$(v,x,y)$, $(v,y,z)$ or $(v,z,x)$ has height at least five, it would contain at least four vertices in its interior.
  This is impossible as~$\bigtriangleup$ has a minimal number of vertices in its interior.)

  \begin{enumerate}[resume, label=\textbf{Case \arabic*:}, ref=\arabic*, leftmargin=*, labelindent=0em, itemindent=3em]
    \item\label{itm:stackedHeight4_2vertex}
    $\mathrm{height}(G[\bigtriangleup \cup V^-]) = 4 \land [\text{$(v,x,y)$, $(v,y,z)$ or $(v,z,x)$ is a face}]$: \\
    Without loss of generality we assume that~$(v,y,z)$ is a face.
    The other cases are symmetric.
    If~$\lvert V^- \rvert = 5/6/7$, then~$G[\bigtriangleup \cup V^-]$ induces an~$8/9/10$-vertex stacked triangulation fulfilling the conditions of Lemma~\ref{lem:stacked2Vertex5}/\ref{lem:stacked2Vertex6}/\ref{lem:stacked2Vertex7}, respectively.
    In all three cases, the lemma describes how~$G'$ is constructed and how an edge guard set~$\Gamma'$ obtained by applying the induction hypothesis can be extended to~$\Gamma$.
    We always have~$k = \lvert G \rvert - \lvert G' \rvert \geq 4$ and~$\ell = 1$, so~$\ell / k \leq 1/4 \leq 2/7$.

    \item
    $\mathrm{height}(G[\bigtriangleup \cup V^-]) = 4$: \\
    Partition~$V^-$ into~$V^- = \{v\} \cup V^-_1 \cup V^-_2 \cup V^-_3$, where~$V^-_1,V^-_2,V^-_3$ are the vertices in the interior of $(v,x,y)$, $(v,y,z)$ and $(v,z,x)$, respectively.
    At least one of them has cardinality three, because $\mathrm{height}(G[\bigtriangleup \cup V^-]) = 4$.
    We assume without loss of generality that~$\lvert V^-_2 \rvert = 3$.

    Assume first that~$\lvert V^- \rvert \geq 7$.
    Remove~$V^-_2$ from~$G$ to get a graph~$\widetilde G$.
    Then~$\widetilde G$ fulfills the condition of either Case~\ref{itm:stacked4VertexSubtree}, Case~\ref{itm:stackedMaxHeight3} or Case~\ref{itm:stackedHeight4_2vertex} and can be treated as described there.
    In the corresponding case another~$\widetilde k \geq 4$ vertices are removed from~$\widetilde G$ to get~$G'$ and~$\widetilde\ell = 1$ extra edge is needed to extend an edge guard set~$\Gamma'$ for~$G'$ to an edge guard set~$\widetilde \Gamma$ for~$\widetilde G$.
    After reinserting the vertices in~$V^-_2$ we need only one extra edge to extend~$\widetilde\Gamma$ to an edge guard set~$\Gamma$ for~$G$ by Lemma~\ref{lem:stacked6}, because~$G[\{v,y,z\} \cup V^-_2]$ is a~$6$-vertex stacked triangulation.
    In total we get~$k = \widetilde k + 3 \geq 7$ and~$\ell = \widetilde \ell + 1 = 2$, so~$\ell / k \leq 2 / 7$.

    Now assume that~$\lvert V^- \rvert \leq 6$.
    Since~$\lvert V^-_2 \rvert = 3$ it must be~$\lvert V^-_1 \rvert = \lvert V^-_3 \rvert = 1$.
    Remove all vertices in~$V^-$ and add two new vertices using Lemma~\ref{lem:stackedWeakForcingLemma} to get a graph~$G'$, such that there is an edge guard set~$\Gamma'$ for~$G'$ with~$y,z \in V(\Gamma')$.
    By Lemma~\ref{lem:stacked6} there is another edge~$e$ containing~$v$ as an endpoint such that all faces of~$G[\{v,y,z\} \cup V^-_2]$ are guarded.
    Then~$\Gamma := \Gamma' \cup \{e\}$ also doubly guards triangles~$(v,x,y)$ and~$(v,z,x)$, so all faces inside triangle~$(x,y,z)$ are guarded.
    In this case we get~$k = 6 - 2 = 4$ and~$\ell = 1$, so~$\ell / k = 1/4 \leq 2/7$.
    \qedhere
  \end{enumerate}
\end{proof}

\noindent
The inductive proof of Theorem~\ref{thm:stackedUpperBound} can be transformed into an efficient algorithm:

\begin{corollary}
  An edge guard set of size~$\lfloor 2n/7 \rfloor$ for an~$n$-vertex stacked triangulation can be computed in time~$O(n)$.
\end{corollary}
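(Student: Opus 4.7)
The plan is to implement the inductive proof of Theorem~\ref{thm:stackedUpperBound} directly, processing the stacked triangulation bottom-up through its canonical construction tree. First I would spend linear time computing this construction tree: since every stacked triangulation on at least four vertices is $3$-degenerate and has an interior vertex of degree~$3$, one can repeatedly identify such a vertex, record its three neighbors as its parent face, and remove it, yielding in $O(n)$ total time a rooted ternary tree whose root is the outer face and whose leaves are the innermost stacked vertices. From this tree it is straightforward to read off, in $O(1)$ per node, the height of any subtree, the sizes $|V^-|$ for any candidate triangle, and in particular a triangle $\bigtriangleup$ with minimum $|V^-| \geq 4$ (one simply walks down from the root to the first node whose induced subtree has at least four internal vertices while none of its three children does).

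Next I would simulate the recursion of Theorem~\ref{thm:stackedUpperBound} on this data structure. Each of Cases~1--4 of the proof removes a constant number of vertices ($4 \leq k \leq 7$), adds $\ell \in \{1,2\}$ edges to $\Gamma$, and invokes exactly one recursive call on a smaller stacked triangulation; Lemma~\ref{lem:stacked6} and Lemma~\ref{lem:stacked7} resolve the base configurations in $O(1)$ by table lookup on the three embeddings of the $6$-vertex stacked triangulation. Case~5 with $|V^-| \geq 7$ first strips off $V^-_2$ and then reduces to one of the earlier cases, which is still $O(1)$ local work. Since each vertex of $G$ is deleted exactly once across the recursion and the forcing gadgets of Lemmas~\ref{lem:stackedWeakForcingLemma} and~\ref{lem:stackedStrongForcingLemma} contribute only a constant number of auxiliary vertices per step (amortized against the $k \geq 4$ removed vertices), the total work is $O(n)$ and the output size is $\lfloor 2n/7 \rfloor$ by the same accounting as in the proof.

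The main obstacle I expect is handling the forcing lemmas constructively: they only guarantee the \emph{existence} of an edge guard set $\Gamma'$ containing prescribed vertices, whereas the algorithm must actually produce such a~$\Gamma'$. I would address this by carrying a small set of \emph{forced outer vertices} as an additional parameter of the recursive procedure, and performing the explicit edge exchanges described in the proofs of Lemmas~\ref{lem:stackedWeakForcingLemma} and~\ref{lem:stackedStrongForcingLemma} on the returned guard set before passing it up the recursion. Each such exchange inspects only the $O(1)$ edges incident to the gadget vertices $a,b,c$ and therefore runs in $O(1)$ time. With this bookkeeping the algorithm executes the proof of Theorem~\ref{thm:stackedUpperBound} verbatim at every node of the construction tree, giving the claimed $O(n)$ running time.
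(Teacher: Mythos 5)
Your overall plan coincides with the paper's: precompute the rooted ternary tree of nested triangles together with the number of interior vertices of each in $O(n)$ time, then replay the induction of Theorem~\ref{thm:stackedUpperBound} with $O(1)$ local work per step; your observation that the forcing lemmas are mere existence statements and must be made constructive by actually performing the $O(1)$ edge exchanges from their proofs is correct and is a point the paper leaves implicit. However, there is a genuine gap in the one step that the paper singles out as the obstacle to linearity, namely how the triangle~$\bigtriangleup$ is located in \emph{each} of the $\Theta(n)$ inductive steps. You propose to walk down from the root to a containment-minimal node with at least four interior vertices. A single such walk costs time proportional to the depth of that node, and for a stacked triangulation whose construction tree is essentially a path (repeatedly stacking into the most recently created face) that node sits at depth $\Theta(n)$; since only $O(1)$ vertices are removed per step, restarting the walk from the root at every step yields $\Theta(n^2)$ total time. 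Your amortization argument (``each vertex is deleted exactly once'') accounts for the deletion work but not for this search work, and this is exactly the naive quadratic algorithm the paper explicitly warns against.

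The paper's remedy is to maintain a priority queue, with only seven possible key values and hence constant-time operations, containing all containment-minimal triangles having between four and ten interior vertices; after each removal only the counts along the path to the first ancestor that still has at least four remaining interior vertices are updated, and each tree node takes part in such updates $O(1)$ times amortized. Your proposal would be repaired by the same device, or alternatively by replacing the repeated root-to-candidate walks with a single post-order traversal that processes a node as soon as all of its children have been handled (after processing, every candidate retains at most three interior vertices, so post-order always delivers a containment-minimal candidate). A minor secondary point: your walk returns a \emph{containment-minimal} candidate rather than one of globally minimum $\lvert V^- \rvert$ as the statement of Theorem~\ref{thm:stackedUpperBound} literally requires; this happens to be harmless, since the proof only uses that no triangle strictly inside~$\bigtriangleup$ has four or more interior vertices, but it deserves an explicit sentence of justification.
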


\noindent
We observe that an~$n$-vertex stacked triangulation contains~$3(n - 3) + 1$ triangles whose nesting can be described by a rooted ternary tree~$T$.
Within linear time we can compute how many vertices each of them contains in its interior in~$G$.
Doing this to find~$\bigtriangleup$ in every subgraph during the induction leads to a naive algorithm with quadratic time.

To achieve linear time, we compute this information only once for~$G$.
Now remember that only triangles with at most ten (and at least four) inner vertices are possible candidates for~$\bigtriangleup$.
Add all those that do not contain any other candidate in their interior into a priority queue~$PQ$.
Here~$PQ$ can be implemented to have constant time \emph{insert}- and \emph{deleteMin}-operations, as only seven distinct values are possible.
For each subgraph~$G'$ of~$G$ we choose the triangle~$\bigtriangleup$ as the minimal element in~$PQ$ and pop it.
When removing the interior vertices of~$\bigtriangleup$, we can update the count of interior vertices of the triangle~$\bigtriangleup_p$ that is the parent of~$\bigtriangleup$ in~$T$.
If~$\bigtriangleup_p$ now has between four and ten interior vertices an no other such triangle in its interior, add it into~$PQ$.
If~$\bigtriangleup_p$ has at most three remaining interior vertices, recursively consider its parent in~$T$.

\section{Conclusion \& Open Problems}
We proved new bounds on the size of edge guard sets for stacked triangulations and quadrangulations.
Considering quadrangulations was motivated by the fact that previous coloring-based approaches for general plane graphs failed on quadrangular faces.
Our upper bound of~$\lfloor n/3 \rfloor$ as well as work from Biniaz et al.~\cite{Biniaz2019} about quadrangular faces that are far apart from each other suggests that the difficulty is not due to the quadrangular faces themselves.
Instead the currently known methods seem to be not strong enough to capture the complexity introduced by a mix of quadrangular and non-quadrangular faces.
Finding tight bounds remains an open question, as our construction needs only $\lfloor (n-2)/4 \rfloor$ edge guards.
We proved that this is best possible for~$2$-degenerate quadrangulations and verified exhaustively and computer assisted that~$\lfloor n/4 \rfloor$ is an upper bound for all quadrangulations with~$n \leq 23$~(master's thesis of the first author~\cite{Jungeblut2019}).
\begin{openproblem}
  How many edge guards are sometimes necessary and always sufficient for quadrangulations?
\end{openproblem}

\noindent
For stacked triangulations we proved tight bounds of~$\lfloor 2n/7 \rfloor$.
By this we identified a non-trivial subclass of triangulations needing strictly less than $\lfloor n/3 \rfloor$ edge guards.
We hope that this can be used to improve the upper bound for general triangulations, for example by combining it with bounds for~$4$-connected triangulations along the lines of~\cite{Heldt2014,Knauer2019}.
There the authors show their claims for~$4$\nobreakdash-connected triangulations before decomposing general triangulations into their~$4$\nobreakdash-connected components.
Those components are stacked inside each other at the separating triangles of the original triangulation.
In~\cite{Jungeblut2019} we present a construction for~$4$-connected triangulations needing~$\lfloor (n-2)/4 \rfloor$ edge guards, but no upper bound is known to us that is better than~$\lfloor n/3 \rfloor$.
\begin{openproblem}
  How many edge guards are sometimes necessary and always sufficient for ($4$\nobreakdash-connected) triangulations?
\end{openproblem}

\noindent
Lastly we want to highlight the open problem for general graphs, namely:
\begin{openproblem}
  Can every~$n$-vertex plane graph be guarded by~$\lfloor n/3 \rfloor$ edge guards?
\end{openproblem}

\section*{Acknowledgements}
We thank Kolja Knauer and Lukas Barth for interesting discussions on the topic.

\bibliographystyle{splncs04}
\bibliography{literature}

\begin{thebibliography}{10}
\providecommand{\url}[1]{\texttt{#1}}
\providecommand{\urlprefix}{URL }
\providecommand{\doi}[1]{https://doi.org/#1}

\bibitem{Biniaz2019}
Biniaz, A., Bose, P., Ooms, A., Verdonschot, S.: {Improved Bounds for Guarding
  Plane Graphs with Edges}. Graphs and Combinatorics  \textbf{35}(2),  437--450
  (3 2019). \doi{10.1007/s00373-018-02004-z}

\bibitem{Bjorling1995}
Bjorling-Sachs, I., Souvaine, D.L.: {An Efficient Algorithm for Guard Placement
  in Polygons with Holes}. Discrete \& Computational Geometry  \textbf{13}(1),
  77--109 (1995)

\bibitem{Bose2009}
Bose, P.: {A Note on the Lower Bound of Edge Guards of Polyhedral Terrains}.
  International Journal of Computer Mathematics  \textbf{86}(4),  577--583
  (2009)

\bibitem{Bose2003}
Bose, P., Kirkpatrick, D., Li, Z.: {Worst-Case-Optimal Algorithms for Guarding
  Planar Graphs and Polyhedral Surfaces}. Computational Geometry
  \textbf{26}(3),  209 -- 219 (2003). \doi{10.1016/S0925-7721(03)00027-0}

\bibitem{Bose1997}
Bose, P., Shermer, T., Toussaint, G., Zhu, B.: {Guarding Polyhedral Terrains}.
  Computational Geometry  \textbf{7}(3),  173 -- 185 (1997).
  \doi{10.1016/0925-7721(95)00034-8}

\bibitem{Chvatal1975}
Chv\'atal, V.: {A Combinatorial Theorem in Plane Geometry}. Journal of
  Combinatorial Theory, Series B  \textbf{18}(1),  39--41 (1975)

\bibitem{Diestel2016}
Diestel, R.: {Graph Theory}, Graduate Texts in Mathematics, vol.~173.
  Springer-Verlag, Heidelberg, 5 edn. (8 2016)

\bibitem{Everett1997}
Everett, H., Rivera-Campo, E.: {Edge Guarding Polyhedral Terrains}.
  Computational Geometry  \textbf{7}(3),  201 -- 203 (1997).
  \doi{10.1016/0925-7721(95)00051-8}

\bibitem{Fisk1978}
Fisk, S.: {A Short Proof of Chv\'atal's Watchman Theorem}. Journal of
  Combinatorial Theory, Series B  \textbf{24}, ~374 (1978).
  \doi{10.1016/0095-8956(78)90059-X}

\bibitem{Heldt2014}
Heldt, D., Knauer, K., Ueckerdt, T.: {On the Bend-Bumber of Planar and
  Outerplanar Graphs}. Discrete Applied Mathematics  \textbf{179},  109 -- 119
  (2014). \doi{10.1016/j.dam.2014.07.015}

\bibitem{Hoffmann1991}
Hoffmann, F., Kaufmann, M., Kriegel, K.: {The Art Gallery Theorem For Polygons
  With Holes}. In: {[1991] Proceedings 32nd Annual Symposium of Foundations of
  Computer Science}. pp. 39--48 (1991)

\bibitem{Jungeblut2019}
Jungeblut, P.: {Edge Guarding Plane Graphs}. Master's thesis, Karlsruhe
  Institute of Technology (10 2019),
  \url{https://i11www.iti.kit.edu/extra/publications/j-egpg-19.pdf}

\bibitem{Jungeblut2020EuroCG}
Jungeblut, P., Ueckerdt, T.: {Edge Guarding Plane Graphs}. In: {Proceedings of
  the 36th European Workshop on Computational Geometry} (2020)

\bibitem{Jungeblut2020WG}
Jungeblut, P., Ueckerdt, T.: {Guarding Quadrangulations and Stacked
  Triangulations with Edges}. In: Adler, I., M{\"u}ller, H. (eds.) {Proceedings
  of the 46th International Workshop on Graph-Theoretic Concepts in Computer
  Science} (2020)

\bibitem{Kahn1983}
Kahn, J., Klawe, M.M., Kleitman, D.J.: {Traditional Galleries Require Fewer
  Watchmen}. SIAM Journal on Algebraic Discrete Methods  \textbf{4}(2),
  194--206 (1983). \doi{10.1137/0604020}

\bibitem{Kaucic2003}
Kau{\v{c}}i{\v{c}}, B., {\v{Z}}alik, B., Novak, F.: {On the Lower Bound of Edge
  Guards of Polyhedral Terrains}. International Journal of Computer Mathematics
   \textbf{80}(7),  811--814 (2003)

\bibitem{Knauer2019}
Knauer, K., Ueckerdt, T.: {Decomposing 4-Connected Planar Triangulations Into
  Two Trees and One Path}. Journal of Combinatorial Theory, Series B
  \textbf{134},  88 -- 109 (2019). \doi{10.1016/j.jctb.2018.05.006}

\bibitem{Lipton1980}
Lipton, R.J., Tarjan, R.E.: {Applications of a Planar Separator Theorem}. SIAM
  Journal on Computing  \textbf{9}(3),  615--627 (1980)

\bibitem{ORourke1983}
O'Rourke, J.: {Galleries Need Fewer Mobile Guards: A Variation on Chv{\'a}tal's
  Theorem}. Geometriae Dedicata  \textbf{14}(3),  273--283 (9 1983).
  \doi{10.1007/BF00146907}

\bibitem{ORourke1987}
O'Rourke, J.: {Art Gallery Theorems and Algorithms}. Oxford University Press
  (1987)

\bibitem{Petersen1891}
Petersen, J.: {Die Theorie der regul{\"a}ren graphs}. Acta Mathematica
  \textbf{15}(1),  193--220 (1891)

\bibitem{Hall1935}
Philip, H.: {On Representatives of Subsets}. J. London Math. Soc
  \textbf{10}(1),  26--30 (1935)

\end{thebibliography}

\end{document}